\PassOptionsToPackage{unicode}{hyperref}
\PassOptionsToPackage{hyphens}{url}
\PassOptionsToPackage{dvipsnames,svgnames,x11names}{xcolor}
\documentclass[
  12pt]{article}

\usepackage{amsmath,amssymb,amsthm}
\usepackage{iftex}
\ifPDFTeX
  \usepackage[T1]{fontenc}
  \usepackage[utf8]{inputenc}
  \usepackage{textcomp} 
\else 
  \usepackage{unicode-math}
  \defaultfontfeatures{Scale=MatchLowercase}
  \defaultfontfeatures[\rmfamily]{Ligatures=TeX,Scale=1}
\fi
\usepackage{lmodern}
\ifPDFTeX\else  
\fi
\IfFileExists{upquote.sty}{\usepackage{upquote}}{}
\IfFileExists{microtype.sty}{
  \usepackage[]{microtype}
  \UseMicrotypeSet[protrusion]{basicmath} 
}{}
\makeatletter
\@ifundefined{KOMAClassName}{
  \IfFileExists{parskip.sty}{%
    \usepackage{parskip}
  }{
    \setlength{\parindent}{0pt}
    \setlength{\parskip}{6pt plus 2pt minus 1pt}}
}{
  \KOMAoptions{parskip=half}}
\makeatother
\usepackage{xcolor}
\setlength{\emergencystretch}{3em} 
\setcounter{secnumdepth}{5}
\makeatletter
\ifx\paragraph\undefined\else
  \let\oldparagraph\paragraph
  \renewcommand{\paragraph}{
    \@ifstar
      \xxxParagraphStar
      \xxxParagraphNoStar
  }
  \newcommand{\xxxParagraphStar}[1]{\oldparagraph*{#1}\mbox{}}
  \newcommand{\xxxParagraphNoStar}[1]{\oldparagraph{#1}\mbox{}}
\fi
\ifx\subparagraph\undefined\else
  \let\oldsubparagraph\subparagraph
  \renewcommand{\subparagraph}{
    \@ifstar
      \xxxSubParagraphStar
      \xxxSubParagraphNoStar
  }
  \newcommand{\xxxSubParagraphStar}[1]{\oldsubparagraph*{#1}\mbox{}}
  \newcommand{\xxxSubParagraphNoStar}[1]{\oldsubparagraph{#1}\mbox{}}
\fi
\makeatother

\providecommand{\tightlist}{%
  \setlength{\itemsep}{0pt}\setlength{\parskip}{0pt}}\usepackage{longtable,booktabs,array}
\usepackage{calc} 
\usepackage{etoolbox}
\makeatletter
\patchcmd\longtable{\par}{\if@noskipsec\mbox{}\fi\par}{}{}
\makeatother
\IfFileExists{footnotehyper.sty}{\usepackage{footnotehyper}}{\usepackage{footnote}}
\makesavenoteenv{longtable}
\usepackage{graphicx}
\makeatletter
\def\maxwidth{\ifdim\Gin@nat@width>\linewidth\linewidth\else\Gin@nat@width\fi}
\def\maxheight{\ifdim\Gin@nat@height>\textheight\textheight\else\Gin@nat@height\fi}
\makeatother
\setkeys{Gin}{width=\maxwidth,height=\maxheight,keepaspectratio}
\makeatletter
\def\fps@figure{htbp}
\makeatother

\addtolength{\oddsidemargin}{-.5in}%
\addtolength{\evensidemargin}{-.1in}%
\addtolength{\textwidth}{1in}%
\addtolength{\textheight}{1.7in}%
\addtolength{\topmargin}{-1in}
\makeatletter
\@ifpackageloaded{caption}{}{\usepackage{caption}}
\AtBeginDocument{%
\ifdefined\contentsname
  \renewcommand*\contentsname{Table of contents}
\else
  \newcommand\contentsname{Table of contents}
\fi
\ifdefined\listfigurename
  \renewcommand*\listfigurename{List of Figures}
\else
  \newcommand\listfigurename{List of Figures}
\fi
\ifdefined\listtablename
  \renewcommand*\listtablename{List of Tables}
\else
  \newcommand\listtablename{List of Tables}
\fi
\ifdefined\figurename
  \renewcommand*\figurename{Figure}
\else
  \newcommand\figurename{Figure}
\fi
\ifdefined\tablename
  \renewcommand*\tablename{Table}
\else
  \newcommand\tablename{Table}
\fi
}
\@ifpackageloaded{float}{}{\usepackage{float}}
\floatstyle{ruled}
\@ifundefined{c@chapter}{\newfloat{codelisting}{h}{lop}}{\newfloat{codelisting}{h}{lop}[chapter]}
\floatname{codelisting}{Listing}

\makeatother
\makeatletter
\makeatother
\makeatletter
\@ifpackageloaded{caption}{}{\usepackage{caption}}
\@ifpackageloaded{subcaption}{}{\usepackage{subcaption}}
\makeatother

\ifLuaTeX
  \usepackage{selnolig}  
\fi
\usepackage[]{natbib}
\bibliographystyle{agsm}
\usepackage{bookmark}

\IfFileExists{xurl.sty}{\usepackage{xurl}}{} 
\urlstyle{same} 
\hypersetup{
  pdftitle={Title},
  pdfauthor={Author 1; Author 2},
  pdfkeywords={3 to 6 keywords, that do not appear in the title},
  colorlinks=true,
  linkcolor={blue},
  filecolor={Maroon},
  citecolor={Blue},
  urlcolor={Blue},
  pdfcreator={LaTeX via pandoc}}

\usepackage{bbm}

\usepackage{multirow}
\usepackage{rotating}
\usepackage{graphicx}
\usepackage{tikz}
\usepackage{pgfplots}
\pgfplotsset{compat=1.18}
\usepackage{xfp} 

\usetikzlibrary {arrows.meta}






\newtheorem{theorem}{Theorem}
\newtheorem{proposition}{Proposition}
\newtheorem{corollary}{Corollary}
\newtheorem{lemma}{Lemma}


\usepackage[acronym]{glossaries}

\makeglossaries
\newacronym{ids}{\textit{IDS}}{\textit{intrusion detection system}}
\newacronym{it}{\textit{IT}}{\textit{information technology}}
\newacronym{roc}{\textit{ROC}}{\textit{receiver operating characteristic}}
\newacronym{cbi}{CBI}{\textit{conservative Bayesian inference}}
\newacronym{mdl}{\textit{MDL}}{\textit{Machine/Deep Learning}}
\newacronym{svm}{\textit{SVMs}}{\textit{support vector machines}}
\newacronym{ann}{\textit{ANNs}}{\textit{artificial neural networks}}
\newacronym{dnn}{\textit{DNNs}}{\textit{deep neural networks}}
\newacronym{pfd}{\textit{pfd}}{\textit{probability of failure on demand}}
\newacronym{iid}{i.i.d.}{\textit{independent identically distributed}}
\newacronym{pk}{\textit{PK}}{\textit{prior knowledge}}
\newacronym{mct}{MCT}{\textit{monotone convergence theorem}}
\newacronym{dct}{DCT}{\textit{dominated convergence theorem}}
\newacronym{cdf}{CDF}{\textit{cumulative distribution function}}
\newacronym{fdds}{\textit{fdds}}{\textit{finite-dimensional distributions}}
\newacronym{wrt}{\textit{w.r.t.}}{\textit{with respect to}}
\newacronym{wlog}{\textit{w.l.o.g.}}{\textit{without loss of generality}}
\newacronym{ivt}{IVT}{\textit{intermediate-value theorem}}
\newacronym{rhs}{\textit{r.h.s.}}{\textit{right hand side}}
\newacronym{lhs}{\textit{l.h.s.}}{\textit{left hand side}}
\newacronym{fp}{\textit{FP}}{false positive}
\newacronym{fn}{\textit{FN}}{false negative}
\newacronym{av}{\textit{AVs}}{autonomous vehicles}
\newacronym{pcm}{\textit{pcm}}{\textit{probability of a crash event per mile}}
\newacronym{pfm}{\textit{pfm}}{\textit{probability of a fatality event per mile}}
\newacronym{dpm}{dpm}{\textit{disengagements per mile}}
\newacronym{srgm}{\textit{SRGM}}{\textit{Software Reliability Growth Models}}
\newacronym{cots}{\textit{COTS}}{\textit{Commercial off-the-shelf}}



\newcommand{\anon}{1}


\begin{document}

\def\spacingset#1{\renewcommand{\baselinestretch}%
{#1}\small\normalsize} \spacingset{1}



\if1\anon
{
  \title{\bf Conservative Software Reliability Assessments Using Collections of Bayesian Inference Problems}
  \author{Kizito Salako\thanks{
    The authors acknowledge work by David Wright (communicated in an internal memo) that proved the necessary form of the prior \eqref{eqn_priorCBIsolBerNoFailures} using a different argument from that presented in this article. 
    }\hspace{.2cm}\\
    The Centre for Software Reliability, Department of Computer Science, \\City St. George's, University of London, \\Northampton square, EC1V 0HB, The United Kingdom\\
    and \\
    Rabiu Tsoho Muhammad \\
    The Centre for Software Reliability, Department of Computer Science, \\City St. George's, University of London, \\Northampton square, EC1V 0HB, The United Kingdom}
  \maketitle
} \fi

\if0\anon
{
  \bigskip
  \bigskip
  \bigskip
  \begin{center}
    {\LARGE\bf Conservative Software Reliability Assessments Using Collections of Bayesian Inference Problems}
\end{center}
  \medskip
} \fi

\bigskip
\begin{abstract}
When using Bayesian inference to support conservative software reliability assessments, it is useful to consider a collection of Bayesian inference problems, with the aim of determining the worst-case value (from this collection) for a posterior predictive probability that characterizes how reliable the software is. Using a Bernoulli process to model the occurrence of software failures, we explicitly determine (from collections of Bayesian inference problems) worst-case posterior predictive probabilities of the software operating without failure in the future. We deduce asymptotic properties of these conservative posterior probabilities and their priors, and illustrate how to use these results in assessments of safety-critical software. This work extends \emph{robust} \emph{Bayesian} \emph{inference} results and so-called \emph{conservative} \emph{Bayesian} \emph{inference} methods.
\end{abstract}

\noindent%
{\it Keywords:} robust Bayesian inference, conservative Bayesian inference, conservative software reliability assessment
\vfill

\newpage
\spacingset{1.8} 

\section{Introduction}
Bayesian inference can support conservative assessments of software reliability. When assessing safety-critical software~---~whose primary function is ensuring safety (e.g., nuclear-power-plant safety protection systems or protection functions in autono\-mous vehicles)~---~an assessor will typically justify their beliefs (about how reliable the software is) with evidence. Evidence can include extensive successful operation of the software in other environments, use of code-blocks within the software that have undergone detailed analyses for correctness and validity, and the ``quality'' of the software development process (including the expertise of the development team); for various examples of evidence, see  \cite{LittlewoodWright1995SafeComp,LittlewoodWright2007TSE},  \cite{LittlewoodStrigini_1993},  \cite{NeilLittlewoodFenton1996SCSC},  \cite{BishopBloomfield2000},  \cite{Popov2013RESS}, \cite{Klein2009-seL4}. 
In principle, the assessor's beliefs are about statistical model parameters that characterize software reliability, such as the probability of the software failing~---~these parameters' true values are unknown, so the assessor expresses their uncertainty about the values (i.e. expresses their beliefs) as a prior distribution over possible parameter values. To be cautious, the assessor might only partly specify this prior, by specifying only probabilities of those parameter value ranges they have ``strong'' evidence for. In effect, they are specifying a \emph{credal set}~---~a \emph{(convex) collection of prior distributions that are consistent with these partial specifications}. 

Upon observing the failure behavior of the software during software testing, the assessor updates their beliefs and makes conservative claims about whether the software is sufficiently reliable; claims supported by posterior probabilities of interest. These posterior probabilities form part of the justification for certifying, or rejecting, the software: examples include the assessor's confidence in an upper (credible) bound on the probability of the software failing at random, or the (predictive) probability that the software will not fail during future executions. In all such cases, the assessor is seeking the most undesirable value for the posterior probability of interest --- the assessor is being as critical as they can be of the software, given the software's observed performance and the available evidence. What \emph{is} this worst-case posterior probability value, and \emph{which} prior distribution (consistent with the assessor's beliefs) gives this value? 

This question can be formalized as a constrained optimization over a collection of prior distributions. Let $(\Omega,\Sigma)$ be a measurable space and $X:\Omega\rightarrow I$ be a random variable mapping onto the compact interval $I\subset\mathbb R$. Let $\mathcal D$ be the set of all probability measures defined on $\Sigma$. Consider a pair of continuous, real-valued, non-negative functions $f,\,g$, with shared domain $I$. Since $I$ is compact, $f(X)$ and $g(X)$ are bounded, so $f(X),\, g(X)\in \mathbb L^1(\Omega,\,\mathbb P)$ for all $\mathbb P\in\mathcal D$. Partition $I$ into contiguous sub-intervals $I_1$, $\ldots$, $I_n$. The preimages $A_i:=X^{-1}(I_i)\in\Sigma$ partition $\Omega$. Assume $\inf_{I_j}g>0$ for some interval $I_j$. Let $0<p_1,\ldots,p_n<1$ satisfy $\sum_{i=1}^n p_i=1$. Our assessor wants the answer to the following optimization problem.
\begin{align}
&\underset{\mathcal D}{\inf}\frac{\mathbb E [f(X)]}{\mathbb E [g(X)]} \nonumber\\
\text{\emph{s.t.}}\hspace{0.25cm}&\mathbb P(A_1)=p_1,\ldots,\mathbb P(A_n)=p_n
\label{eqn_genCBIprob}
\end{align}
Problem~\eqref{eqn_genCBIprob} is solved when one can state the infimum and a prior distribution of $X$ that achieves it. Typically, for parameter value $x$, $g(x)$ is the likelihood of the software's observed failure behavior during operation, while $f(x)/g(x)$ is a probability of a \emph{yet-to-be-observed} (possibly \emph{unobservable}) event related to the software's reliability. Prior distributions in $\mathcal D$ that satisfy \eqref{eqn_genCBIprob}'s constraints are ``feasible'' and form a credal set. 

Problem~\eqref{eqn_genCBIprob} is not new. Generalizations have been extensively studied in the robust Bayesian inference literature (see  \cite{berger1990_SensitivityToPrior},  \cite{moreno1991robust},  \cite{1991_Lavine},  \cite{Berger_1994_RobustnessInBidinesionalModels}), and certain nonlinear fractional program forms of~\eqref{eqn_genCBIprob} can be solved using Dinkelbach-type fixed-point iteration (see  \cite{Dinkelbach1967},  \cite{Schaible1976Dinkelbach}),  while several special cases related to~\eqref{eqn_genCBIprob}~---~so-called \emph{conservative Bayesian inference} (\acrshort*{cbi}) problems~---~have been solved to support conservatism in reliability assessments (see  \cite{bishop_toward_2011},  \cite{strigini_software_2013},  \cite{zhao_assessing_2019,littlewood_reliability_2020},  \cite{salako_conservative_2021},  \cite{SalakoZhao_TSE_2023,kizito_QRE_2024}). Although \acrshort*{cbi} problems can often be solved numerically, explicit solutions tend to be only for special cases, and the solution methods employed can appear distinct and \emph{ad hoc}. This paper: \textbf{i)} unifies these solution methods, exemplified by an explicit general solution to a problem that arises often in software reliability assessments; \textbf{ii)} analyses this solution's properties; \textbf{iii)} highlights the practical implications of these results. Worst-case software reliability estimates are stated in Section~\ref{sec_results}, while Section~\ref{sec_discussion} discusses their properties, applications, limitations, and future work.

\section{Results}
\label{sec_results}
Proposition~\ref{prop_cbi_transform} simplifies problem~\eqref{eqn_genCBIprob} (\emph{cf.}  Theorem~1 of  \cite{moreno1991robust}, Theorem 3 of  \cite{BetroRuggeriMeczarski1994-JSPI}).
\begin{proposition}
\label{prop_cbi_transform}
For any $\mathbb P\in\mathcal D$, $
        \dfrac{\mathbb E[f(X)]}{\mathbb E[g(X)]}=\frac{\sum\limits_{i=1}^{n}f(x_i)\mathbb P(A_i)}{\sum\limits_{i=1}^{n}g(x_i)\mathbb P(A_i)}$  for some $x_i\in I_i$ ($i=1...n$).
\end{proposition}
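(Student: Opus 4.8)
The plan is to \emph{linearize} the fraction so that a single mean-value argument, applied interval-by-interval, produces one representative point $x_i\in I_i$ that simultaneously serves the numerator and the denominator. A naive term-by-term approach fails here: the point of $I_i$ at which $f$ equals its conditional average over $A_i$ need not coincide with the point at which $g$ equals its conditional average, and matching two integrals with a single point per interval is generically impossible. The key observation is that one only needs to match \emph{one} quantity, namely the numerator minus a scalar multiple of the denominator.

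Concretely, since the left-hand side is assumed well defined we have $\mathbb E[g(X)]>0$; I would set $\lambda:=\mathbb E[f(X)]/\mathbb E[g(X)]$ and define the continuous function $h:=f-\lambda g$ on $I$, so that $\mathbb E[h(X)]=\mathbb E[f(X)]-\lambda\,\mathbb E[g(X)]=0$ by construction. Because the $A_i$ partition $\Omega$, I would write $\mathbb E[h(X)]=\sum_{i=1}^{n}\int_{A_i}h(X)\,d\mathbb P$ and handle each interval separately. For each $i$ with $\mathbb P(A_i)>0$, continuity of $h$ on the compact interval $I_i$ gives a minimum $m_i$ and maximum $M_i$ there, whence $m_i\le h(X)\le M_i$ pointwise on $A_i$; integrating over $A_i$ and dividing by $\mathbb P(A_i)$ places the conditional average $\bar h_i:=\mathbb P(A_i)^{-1}\int_{A_i}h(X)\,d\mathbb P$ in $[m_i,M_i]$. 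The intermediate-value theorem (\acrshort*{ivt}) then yields $x_i\in I_i$ with $h(x_i)=\bar h_i$, i.e. $\int_{A_i}h(X)\,d\mathbb P=h(x_i)\mathbb P(A_i)$; for the indices with $\mathbb P(A_i)=0$ I would pick $x_i\in I_i$ arbitrarily, as those terms vanish. Summing over $i$ gives $\sum_{i=1}^{n}h(x_i)\mathbb P(A_i)=\mathbb E[h(X)]=0$, which rearranges to $\sum_i f(x_i)\mathbb P(A_i)=\lambda\sum_i g(x_i)\mathbb P(A_i)$.

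It then remains to divide by $\sum_i g(x_i)\mathbb P(A_i)$, so the final step is to confirm this denominator is strictly positive. Here the standing hypothesis $\inf_{I_j}g>0$ for some $j$ does the work: the chosen $x_j\in I_j$ satisfies $g(x_j)\ge\inf_{I_j}g>0$, and since the feasible measures at stake have $\mathbb P(A_j)=p_j>0$ while every summand is non-negative (as $g\ge0$), the denominator is bounded below by $g(x_j)\mathbb P(A_j)>0$. Dividing yields $\mathbb E[f(X)]/\mathbb E[g(X)]=\lambda=\sum_i f(x_i)\mathbb P(A_i)/\sum_i g(x_i)\mathbb P(A_i)$, as claimed. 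I expect the main obstacle to be precisely the insistence on a \emph{single} $x_i$ per interval together with the denominator check — both resolved by the $\lambda$-linearization and the positivity hypothesis on $g$ — rather than the mean-value step, which is routine.
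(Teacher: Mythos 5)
Your proof is correct, but it takes a genuinely different route from the paper's. You argue directly: linearize via $h:=f-\lambda g$ with $\lambda:=\mathbb E[f(X)]/\mathbb E[g(X)]$, so $\mathbb E[h(X)]=0$, then use a per-interval mean-value/intermediate-value argument to find a single $x_i\in I_i$ matching the conditional average of $h$ over $A_i$, sum, and divide. The paper instead argues by contradiction: if the equality held for no choice of $(x_1,\ldots,x_n)$, then continuity of the right-hand side on the connected product $I_1\times\cdots\times I_n$ and the intermediate-value theorem force one strict inequality to hold uniformly; the paper then replaces the points $x_i$ by the restrictions $X_i$ of $X$ to $A_i$, integrates the resulting pointwise inequality over the product of conditional measures $\otimes_i\mu_i$ (via Tonelli and the identity $\int_{A_i}g(X_i)\,\mathtt{d}\mu_i=\mathbb E[g(X)\mathbb 1_{A_i}]/\mathbb P(A_i)$, written with indicator functions), collapsing it to the contradiction $\mathbb E[f(X)]\mathbb E[g(X)]>\mathbb E[f(X)]\mathbb E[g(X)]$. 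Your route is more elementary and shorter~---~no product spaces or Tonelli~---~and it exhibits concretely which points $x_i$ work (those where $f-\lambda g$ equals its conditional average); the paper's route avoids having to match any quantity exactly inside each interval, which sidesteps endpoint-attainment issues entirely. That brings up the one imprecision you should repair: the $I_i$ partition $I$, so they are generally \emph{not} all compact (e.g.\ $I_i=(y_{i-1},y_i]$ in the reliability application), and $h$ need not attain a minimum or maximum on $I_i$. The fix is one line: bound the conditional average by $\inf_{I_i}h$ and $\sup_{I_i}h$ instead, and note that it cannot equal a non-attained endpoint of the interval $h(I_i)$~---~otherwise a function that is strictly positive (or strictly negative) everywhere on a set of positive probability would integrate to zero there~---~so the average lies in $h(I_i)$ and the intermediate-value theorem still produces $x_i\in I_i$, not merely in $\bar I_i$. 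With that patch, and with your (paper-consistent) reading that the measures in play satisfy $\mathbb P(A_j)=p_j>0$ so both denominators are positive, the argument is complete.
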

\begin{proof} See Appendix~A, Supplementary Material. \end{proof}
 
By proposition \ref{prop_cbi_transform}, problem \eqref{eqn_genCBIprob} is equivalent to the nonlinear fractional program
\begin{align}
&\underset{x_1\in I_1,...,x_n\in I_n}{\inf}\frac{\sum_{i=1}^{n}f(x_i)p_i}{\sum_{i=1}^{n}g(x_i)p_i}\,. 
\label{eqn_morenoetalCBIprob}
\end{align} 
To illustrate solving \eqref{eqn_morenoetalCBIprob}, consider the following software reliability assessment scenario. An assessor is interested in the probability that a piece of on-demand software will operate correctly on the next $m$ demands, after observing the software correctly handle $k$ demands while also failing on $r$ demands. A Bernoulli (\acrshort*{iid}) process is judged 
to adequately model the software's successes and failures on a random sequence of demands during its operation. The Bernoulli process parameter $X$ is the unknown \emph{probability of failure on the next demand} (\emph{pfd}), and $\mathcal D$ is the set of all prior distributions over the unit interval (i.e. all possible prior distributions of $X$). Specifically, for $0=y_0<y_1<\ldots<y_n=1$, the assessor expresses prior beliefs about how likely $X$ is to lie in each of $n$ intervals $I_1:=[y_0,y_1]$, $I_2:=(y_1,y_2]$, \ldots, $I_n:=(y_{n-1},y_n]$, by singling out those $\mathbb P\in\mathcal D$ that obey $\mathbb P (X^{-1}(I_i))=p_{i}$ for $i=1,\ldots,n$. With $f(x)=x^r(1-x)^{m+k}$ and $g(x)=x^r(1-x)^{k}$, problem~\eqref{eqn_morenoetalCBIprob} becomes\begin{align}
\underset{x_1\in I_1,\ldots,x_n\in I_n}{\inf}\frac{\sum_{i=1}^{n}x_i^r(1-x_i)^{m+k} p_i}{\sum_{i=1}^{n}x_i^r(1-x_i)^{k} p_i}\,.
\label{eqn_morenoetalBer}
\end{align}
\begin{theorem}
\label{thm_gen_sol}
A unique fixed point solves \eqref{eqn_morenoetalBer}~---~i.e. a triplet, $\phi^\ast$, $y_\ast$, $y_{**}$, that satisfies
\begin{gather*}
(1-y_{**})^m\left(\frac{r-y_{**}(m+k+r)}{r-y_{**}(k+r)}\right)=\phi^\ast = (1-y_\ast)^m\left(\frac{r-y_\ast(m+k+r)}{r-y_\ast(k+r)}\right),\\ \phi^\ast=\min\{\phi^\ast_1,\phi^\ast_2\}\,,
\end{gather*}where
\begin{align*}
\phi^\ast_1&=\left\{\begin{array}{ll}\dfrac{\sum_{i=2}^{j_1}f(y_{i-1})p_i+\sum_{i=j_1+1}^{j_2-1}f(y_{i})p_i+f({y_\ast})p_{j_2}+\sum_{i=j_2+1}^{n}f(y_{i-1})p_i}{\sum_{i=2}^{j_1}g(y_{i-1})p_i+\sum_{i=j_1+1}^{j_2-1}g(y_{i})p_i+g({y_\ast})p_{j_2}+\sum_{i=j_2+1}^{n}g(y_{i-1})p_i}\,,\hfill j_1<j_2\\
\dfrac{\sum_{i=2}^{n}f(y_{i-1})p_i}{\sum_{i=2}^{n}g(y_{i-1})p_i}\,,\hfill j_1=j_2\end{array}\right.\\
\phi^\ast_2&=\left\{\begin{array}{ll}\dfrac{\sum_{i=2}^{j_1-1}f(y_{i-1})p_i+\sum_{i=j_1}^{j_2-1}f(y_{i})p_i+f({y_\ast})p_{j_2}+\sum_{i=j_2+1}^{n}f(y_{i-1})p_i}{\sum_{i=2}^{j_1-1}g(y_{i-1})p_i+\sum_{i=j_1}^{j_2-1}g(y_{i})p_i+g({y_\ast})p_{j_2}+\sum_{i=j_2+1}^{n}g(y_{i-1})p_i}\,,\hfill j_1<j_2\\
\dfrac{\sum_{i=2}^{j_2-1}f(y_{i-1})p_i+f(y_\ast)p_{j_2}+\sum_{i=j_2+1}^{n}f(y_{i-1})p_i}{\sum_{i=2}^{j_2-1}g(y_{i-1})p_i+g(y_\ast)p_{j_2}+\sum_{i=j_2+1}^{n}g(y_{i-1})p_i}\,,\hfill j_1=j_2\end{array}\right.
\end{align*}for some $1\leqslant j_1\leqslant j_2\leqslant n$ such that $\ y_{**}<\frac{r}{r+m+k}<\frac{r}{r+k}<y_\ast$, $\ y_{**}\in \bar{I}_{j_1}$, and $\ y_\ast\in \bar{I}_{j_2}$ (n.b. $\bar{I}_i$ is the closure of $I_{i}$). The infimum $\phi^*$ is attained by the prior distribution of $X$
\begin{align}
\label{eqn_priorCBIsolBer}
    \mathbb P(X=x)=\left\{\begin{array}{cl}
    p_1,  & \text{if } x=\mathbbm{1}_{n=1},\\ 
     p_{i}, & \left\{\begin{array}{l}\text{if } x=y_{i-1} \text{ and } \{2\leqslant i<j_1\}\cup\{j_2<i\leqslant n\},\\\text{if } x=y_{i} \text{ and } \{j_{1}<i<j_{2}\},\end{array}\right.\\
        p_{j_1}, & \text{if } x=y_{j_{1}-1}\mathbbm{1}_{{\phi}^*=\phi_1^*}+\min\{y_{j_1},\,y_*\}\mathbbm{1}_{{\phi^*}=\phi_2^*\ne\phi_1^*},\\
           p_{j_2}, & \text{if } x=\left(y_{j_{1}-1}\mathbbm{1}_{j_1=j_2}+y_{*}\mathbbm{1}_{j_1<j_2}\right)\mathbbm{1}_{{\phi}^*=\phi_1^*}+y_{*}\mathbbm{1}_{{\phi^*}=\phi_2^*\ne\phi_1^*},\\
           0, & \text{otherwise.}
    \end{array}\right.
\end{align}
\end{theorem}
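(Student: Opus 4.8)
The plan is to treat \eqref{eqn_morenoetalBer} as a separable linear-fractional program and attack it by the parametric (Dinkelbach) device cited earlier. Writing $F(\mathbf x)=\sum_i f(x_i)p_i$ and $G(\mathbf x)=\sum_i g(x_i)p_i$, with $G>0$ guaranteed by $\inf_{I_j}g>0$ together with $p_j>0$, the infimum of the ratio equals the unique scalar $\phi^\ast$ for which $\min_{\mathbf x}\bigl(F(\mathbf x)-\phi^\ast G(\mathbf x)\bigr)=0$, and any minimizer of $F-\phi^\ast G$ attains the infimum. Since $F-\phi G=\sum_i p_i\,h_\phi(x_i)$ with $h_\phi(x):=f(x)-\phi g(x)=x^r(1-x)^k\bigl[(1-x)^m-\phi\bigr]$, the minimization decouples into $n$ one-dimensional problems $\min_{x_i\in\bar I_i}h_\phi(x_i)$, so $\Phi(\phi):=\sum_i p_i\min_{\bar I_i}h_\phi$ is continuous and, because $G>0$, strictly decreasing in $\phi$, yielding a unique root $\phi^\ast$.

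The heart of the argument is the shape of $h_\phi$ on $[0,1]$. Differentiating and factoring gives
\[
h_\phi'(x)=x^{r-1}(1-x)^{k-1}\bigl[r-(k+r)x\bigr]\bigl(\psi(x)-\phi\bigr),\qquad \psi(x):=(1-x)^m\frac{r-(m+k+r)x}{r-(k+r)x}.
\]
I would record that $\psi(0)=1$, that $\psi$ has a zero at $\tfrac{r}{r+m+k}$ and a pole at $\tfrac{r}{r+k}$, and that $\psi>0$ precisely on $\bigl(0,\tfrac{r}{r+m+k}\bigr)\cup\bigl(\tfrac{r}{r+k},1\bigr)$. Proving $\psi$ strictly decreasing on each of these two sub-intervals (a monotonicity estimate on its logarithmic derivative) then shows that $\psi(x)=\phi$ has exactly one root in each, namely $y_{**}\in\bigl(0,\tfrac{r}{r+m+k}\bigr)$ and $y_\ast\in\bigl(\tfrac{r}{r+k},1\bigr)$; tracking the sign of $h_\phi'$ across these roots and the pole establishes that $h_\phi$ increases to a local maximum at $y_{**}$, decreases to a local minimum at $y_\ast$, then increases again. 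This simultaneously delivers the ordering $y_{**}<\tfrac{r}{r+m+k}<\tfrac{r}{r+k}<y_\ast$ and the identities $\phi^\ast=\psi(y_{**})=\psi(y_\ast)$, i.e.\ the two outer equalities of the stated fixed point.

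With this increasing--decreasing--increasing profile in hand, I would read off the per-interval minimizers. On any $\bar I_i$ lying in an increasing stretch the minimum is at the left endpoint $y_{i-1}$; on any $\bar I_i$ in the decreasing stretch it is at the right endpoint $y_i$; the interval $\bar I_{j_2}\ni y_\ast$ attains its minimum at the interior point $y_\ast$. The only ambiguity is the interval $\bar I_{j_1}\ni y_{**}$ containing the local maximum, where $h_\phi$ rises then falls, so its minimum sits at whichever endpoint gives the smaller value: substituting the left endpoint $y_{j_1-1}$ produces the candidate ratio $\phi_1^\ast$ and the right endpoint $y_{j_1}$ produces $\phi_2^\ast$ (the $j_1=j_2$ lines absorbing the coincidence of the two critical intervals), whence $\phi^\ast=\min\{\phi_1^\ast,\phi_2^\ast\}$. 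Assembling the chosen endpoints and the interior point $y_\ast$ into the measure placing mass $p_i$ at each minimizer reproduces exactly \eqref{eqn_priorCBIsolBer}, and the condition $\Phi(\phi^\ast)=0$ is precisely the assertion that this ratio equals $\phi^\ast$, closing the fixed point.

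I expect the main obstacle to be twofold. First, the monotonicity of $\psi$ on its two positivity intervals is what forces exactly two interior critical points, hence the clean unimodal-type profile; without it the endpoint/interior bookkeeping collapses, so this estimate must be carried out carefully, including the boundary cases $r=0$ or $k=0$, where the endpoints $0,1$ re-enter as candidate minimizers. Second, the system is genuinely self-referential, since $y_\ast$ and $y_{**}$ are functions of $\phi^\ast$ through $\psi$ while $\phi^\ast$ is itself defined through sums containing $f(y_\ast)$ and $g(y_\ast)$; I would close the loop by invoking the strict monotonicity of $\Phi$ to certify that the root $\phi^\ast$, and therefore the triplet $(\phi^\ast,y_\ast,y_{**})$ together with the prior \eqref{eqn_priorCBIsolBer}, is unique.
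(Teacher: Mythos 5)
Your proposal is correct in substance but reaches Theorem~\ref{thm_gen_sol} by a genuinely different route from the paper. You reduce \eqref{eqn_morenoetalBer} to the parametric (Dinkelbach) problem $\Phi(\phi)=\sum_i p_i\min_{x_i\in\bar{I}_i}\{f(x_i)-\phi g(x_i)\}$, exploit the separability of $F-\phi G$ into $n$ independent one-dimensional minimizations, and get existence and uniqueness of $\phi^*$ from continuity and strict monotonicity of $\Phi$ in the scalar $\phi$. The paper instead works directly with the $n$-dimensional ratio: it computes $\partial\phi/\partial x_i\propto(r-x_i(r+k))\left(h(x_i)-\phi\right)$, shows at most two components can vanish non-trivially, catalogues the finitely many functional forms the minimized objective can take, and settles uniqueness with an intermediate-value/simple-root argument applied piecewise to the difference between each functional form and $h$. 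Both proofs pivot on the same key lemma~---~your $\psi$ \emph{is} the paper's $h$, and the shape analysis (strict decrease on the two positivity branches $[0,\frac{r}{r+m+k})$ and $(\frac{r}{r+k},1]$, stationary points confined to the negative middle band) is identical, as is the attractor/repeller endpoint bookkeeping and the $\min\{\phi_1^*,\phi_2^*\}$ dichotomy at the interval containing $y_{**}$. What your route buys: the decoupling makes the per-interval minimizer logic transparent, uniqueness falls out of monotone root-finding in one variable rather than the paper's more delicate finiteness-of-forms argument, and the construction is immediately algorithmic~---~it formalizes exactly the parametric fixed-point formulation and Dinkelbach iteration that the paper only gestures at in Section~\ref{subsec_limitationsandFutureWork}. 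What the paper's route buys: it never needs the ratio-to-parametric equivalence, so it is insensitive to points where $G$ vanishes, and it isolates the placement $x_1=0$ by an explicit averaging argument.

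Three loose ends need tightening before yours is a complete proof. First, $\Phi$ is not strictly decreasing simply ``because $G>0$'': the contribution of $\bar{I}_1$ is identically zero on the relevant range of $\phi$ (its minimum is $h_\phi(0)=0$), and the right endpoint of $\bar{I}_n$ also kills $g$; strictness comes from the fact that every minimizing configuration has at least one coordinate in an interior interval, where $g>0$~---~say so. Second, the Dinkelbach equivalence requires $G>0$ on the whole compact feasible product; this holds for $n\geqslant3$ because the closures of the middle intervals avoid $0$ and $1$, but it fails for $n\leqslant2$, which is precisely the degenerate regime ($\phi^*=0$, infimum not attained) that the theorem's indicator $\mathbbm{1}_{n=1}$ and the paper's discussion handle separately; your proof should state this restriction. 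Third, your bookkeeping silently assumes that $y_*\notin\bar{I}_1$ and that the minimizer on $\bar{I}_1$ is the left endpoint $0$ (which is why the sums in $\phi_1^*$ and $\phi_2^*$ start at $i=2$ and prior \eqref{eqn_priorCBIsolBer} puts $p_1$ at $0$). Both facts follow from the a priori bound $\phi^*\leqslant(1-y_1)^m$~---~obtained by evaluating the ratio at any feasible configuration with $x_1=0$~---~since this forces $h_{\phi^*}(y_1)\geqslant0=h_{\phi^*}(0)$ and, combined with $\psi(y_*)>(1-y_*)^m$, forces $y_*>y_1$. Without that bound, a Dinkelbach minimizer placing $p_1$ at $y_1$ or at an interior point of $I_1$ is not excluded, and your assembled prior would not match \eqref{eqn_priorCBIsolBer}.
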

\begin{proof} See Appendix~B, Supplementary Material.
\end{proof}
\begin{corollary}\label{cor_special_case_no_failure_solution}
Consider the limiting case of \eqref{eqn_morenoetalBer} with no software failures ($r=0$),
\begin{align}
\label{eq_obj_r=0}
\underset{x_1\in I_1,\ldots,x_n\in I_n}{\inf}\frac{\sum_{i=1}^{n}(1-x_i)^{m+k} p_i}{\sum_{i=1}^{n}(1-x_i)^{k} p_i}
\end{align}
A unique fixed point solves \eqref{eq_obj_r=0}~---~i.e. a pair, $\phi^\ast$, $y_\ast$, that satisfies 
\begin{align*}
\phi^*&=(1-y_*)^{m}(m+k)/k\,,\\
\phi^*&=\frac{\sum_{i=1}^{j-1}(1-y_{i})^{m+k}p_{i}+(1-y_*)^{m+k}p_{j}+\sum_{i=j+1}^{n}(1-y_{i-1})^{m+k}p_{i}}{\sum_{i=1}^{j-1}(1-y_{i})^{k}p_{i}+(1-y_*)^{k}p_{j}+\sum_{i=j+1}^{n}(1-y_{i-1})^{k}p_{i}}\,, 
\end{align*}
for some $1\leqslant j\leqslant n$ with $y_*\in \bar{I}_j$. The infimum $\phi^*$ is attained by the prior distribution of $X$ 
\begin{align}
    \mathbb P(X=x)=\left\{\begin{array}{cl}
     p_{i},&\left\{\begin{array}{l}\text{if } x=y_i \text{ and } i<j,\\\text{if } x=y_{i-1} \text{ and } i>j,\end{array}\right.\\
        p_{j}, & \text{if } x=y_*, \\
           0, & \text{otherwise.}
    \end{array}\right.
\label{eqn_priorCBIsolBerNoFailures}
\end{align}
\end{corollary}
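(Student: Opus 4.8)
The plan is to solve \eqref{eq_obj_r=0} by the parametric (Dinkelbach) reformulation underlying Theorem~\ref{thm_gen_sol}, specialized to $r=0$, where both numerator and denominator integrands are strictly decreasing so that the two-critical-point structure ($y_{**}<r/(r+m+k)<r/(r+k)<y_\ast$) of the general case collapses to a single interior critical point. Writing $f(x)=(1-x)^{m+k}$, $g(x)=(1-x)^{k}$, and setting $F(\phi):=\inf_{x_i\in\bar I_i}\sum_{i=1}^n\bigl(f(x_i)-\phi\,g(x_i)\bigr)p_i$, the standard fractional-programming equivalence gives that $\phi^\ast$ is the infimum in \eqref{eq_obj_r=0} if and only if $F(\phi^\ast)=0$ (the denominator being strictly positive on the feasible set). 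Since each $x_i$ enters only the $i$-th summand, the inner infimum decouples across intervals, $F(\phi)=\sum_{i=1}^n p_i\,\inf_{x\in\bar I_i}\psi_\phi(x)$ with $\psi_\phi(x):=f(x)-\phi\,g(x)=(1-x)^{k}\bigl((1-x)^{m}-\phi\bigr)$.

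Next I would analyse $\psi_\phi$ on $[0,1]$. A direct computation gives $\psi_\phi'(x)=-(1-x)^{k-1}\bigl((m+k)(1-x)^{m}-k\phi\bigr)$, so for every admissible $\phi$ (note $\phi\le 1<(m+k)/k$ since $f/g=(1-x)^m\le 1$) the function $\psi_\phi$ is strictly decreasing on $[0,y_\ast]$ and strictly increasing on $[y_\ast,1]$, with a unique interior minimizer $y_\ast=y_\ast(\phi)$ characterized by $(m+k)(1-y_\ast)^m=k\phi$, i.e.\ $\phi=(1-y_\ast)^m(m+k)/k$ --- the first fixed-point equation. The per-interval minimizer is then immediate: on an interval lying to the left of $y_\ast$ the minimum is at its right endpoint $y_i$, on one lying to the right at its left endpoint $y_{i-1}$, and on the unique interval $\bar I_j\ni y_\ast$ at $y_\ast$ itself. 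These optimizing points are exactly the atoms of the prior \eqref{eqn_priorCBIsolBerNoFailures}, and by Proposition~\ref{prop_cbi_transform} the discrete prior placing mass $p_i$ at them is feasible and attains the infimum.

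Substituting this configuration back into $F(\phi)=0$ and collecting the $(1-\,\cdot\,)^{m+k}$ and $(1-\,\cdot\,)^{k}$ terms according to the sign of $\phi$ then yields precisely the second fixed-point equation --- the ratio whose numerator and denominator run over $i<j$, the single index $j$ (contributing $y_\ast$), and $i>j$. Thus any root of $F$ simultaneously produces both stated equations together with the prior \eqref{eqn_priorCBIsolBerNoFailures}.

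It remains to establish existence and uniqueness of the fixed point, which I expect to be the main obstacle. The idea is that $F$ is continuous and strictly decreasing in $\phi$: each term $\inf_{x\in\bar I_i}\psi_\phi(x)$ is a pointwise infimum of the affine-in-$\phi$ maps $\phi\mapsto f(x)-\phi g(x)$ with nonpositive slopes $-g(x)$, hence concave and non-increasing, and strictly decreasing because at least one active minimizer (for instance the interior one at $y_\ast$) has $g>0$. Together with $F(0)\ge 0$ and $F(\phi)\to-\infty$ as $\phi\to\infty$, this yields a unique root $\phi^\ast$; the strictly monotone dependence $\phi\mapsto y_\ast(\phi)$ then pins down $y_\ast$, and hence $j$, uniquely. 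The delicate point inside this step is that the minimizing index $j$ (and therefore the explicit form of the ratio) changes as $y_\ast(\phi)$ crosses the grid points $y_1,\dots,y_{n-1}$, so $F$ is only piecewise smooth; one must verify that $F$ stays continuous and strictly monotone across these breakpoints, so that the self-consistency between $\phi^\ast$, $y_\ast$, and $j$ holds at a single point. The degenerate cases ($n=1$, $y_\ast$ coinciding with a grid point, or $j=n$ so the right-hand sum is empty) are absorbed by the convention $y_\ast\in\bar I_j$ and reduce to the same two equations.
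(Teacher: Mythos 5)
Your proof is correct, but it takes a genuinely different route from the paper's. The paper proves Corollary~\ref{cor_special_case_no_failure_solution} in a single step as a limit of Theorem~\ref{thm_gen_sol}: it applies the theorem to an auxiliary problem with $n+1$ constraining intervals whose first interval is $[0,r/(r+k)]$ with probability $o(1)$, and lets $r\to 0$; in that limit the singular point $r/(r+k)$ and the repeller $y_{**}<\tfrac{r}{r+m+k}$ collapse to $0$, $h(x)\to(1-x)^m(m+k)/k$, and the two functional forms $\phi_1^*$, $\phi_2^*$ merge into the single ratio of the corollary. You instead give a direct, self-contained argument via the parametric (Dinkelbach) reformulation: separability $F(\phi)=\sum_i p_i\inf_{\bar I_i}\psi_\phi$, the computation $\psi_\phi'(x)=-(1-x)^{k-1}\bigl((m+k)(1-x)^m-k\phi\bigr)$ yielding a unique interior minimizer (hence the first fixed-point equation and the atom placement of \eqref{eqn_priorCBIsolBerNoFailures}), and the root of $F$ yielding the second equation. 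Your approach buys independence from Theorem~\ref{thm_gen_sol} and avoids the limit interchange the paper leaves implicit (continuity of the fixed point in $r$ together with the vanishing auxiliary interval is a nontrivial step, comparable to what the paper spends Appendix~F on for the $y_1\to0$ limit); moreover your concavity/strict-monotonicity argument for $F$ (an infimum of affine-in-$\phi$ maps whose active minimizers all have $g>0$, since they lie strictly left of $1$) gives existence and uniqueness cleanly, and in fact already dissolves the worry you flag about the index $j$ changing as $y_*(\phi)$ crosses grid points: concavity of each summand guarantees continuity and monotonicity of $F$ across those breakpoints without a separate verification. It also instantiates the fixed-point iteration the paper only sketches in its discussion. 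The paper's route buys economy and the conceptual point that the corollary literally \emph{is} the no-failure limit of the attractor--repeller solution. Two small items to tidy in a final write-up: treat $n=1$ separately (there the denominator can vanish at $x_1=1$, and the solution degenerates to $\phi^*=0$, $y_*=1$), and state explicitly that uniqueness of the \emph{fixed point} (not merely of the infimum) follows because any pair $(\phi,y)$ satisfying both equations with $y\in\bar I_j$ realizes the minimizing configuration, hence $F(\phi)=0$ and so $\phi=\phi^*$.
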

\begin{proof} Let $r\to0$ in Theorem~\ref{thm_gen_sol}, assuming $n+1$ constraining intervals where the first interval's upper endpoint is $r/(r+k)$ and the first interval's probability is $o(1)$.\hfill\qedhere
\end{proof}
%
%
\begin{figure}[h!]
\centering
\scalebox{0.9}{
   \begin{tikzpicture}
  \draw[->] (0,0) -- (14,0);
\draw[darkgray, line width=2.5pt] (0,-0.005) -- (0,0.90) node[below=5pt] {\space};
\draw[darkgray, line width=2.5pt] (6.5,-0.005) -- (6.5,0.5) node[below=5pt] {\space};
\draw[darkgray, line width=2.5pt] (2,-0.005) -- (2,0.75) node[below=5pt] {\space};
\draw[darkgray, line width=2.5pt] (3.5,-0.005) -- (3.5,0.9) node[below=5pt] {\space};
\draw[darkgray, line width=2.5pt] (12.5,-0.005) -- (12.5,0.3) node[below=5pt] {\space};
\draw[darkgray, line width=2.5pt] (8.7,-0.005) -- (8.7,0.61) node[below=5pt] {\space};
\draw[darkgray, line width=2.5pt] (9.7,-0.005) -- (9.7,0.99) node[below=5pt] {\space};
\draw[darkgray, line width=2.5pt] (10.5,-0.005) -- (10.5,0.74) node[below=5pt] {\space};
   \draw (0,0) -- (0,-0.15) node[below=5pt] at (0,0) {$0$};
  \foreach \x/\label in {
    1/{\dots},
    2/{$y_{j_{1}-2}$},
    3.5/{$y_{{j_1}-1}$},
    5.5/{$y_{j_1}$},
    6.5/{$y_{j_1+1}$},
    7.5/{\dots},
    8.7/{$y_{{j_2}-1}$},
    10.5/{$y_{j_2}$},
    11.5/{\dots},
    12.5/{$y_{n-1}$},
    13.5/{$1$}
  }{
    \draw (\x,0) -- (\x,-0.15);
    \node[below=5pt] at (\x,0) {\label};
  }
  \node[below=5.4pt] at (4.6,0) {$y_{**}$};
    \node[below=2.5pt] at (8,0) {\space};
  \node[below=5.4pt] at (9.7,0) {$y_{*}$};
\node[above=22pt] at (0,0.1) {$p_1$};
\node[above=12pt] at (6.5,0.1) {$p_{j_1+1}$};
\node[above=15pt] at (2,0.1) {$p_{{j_1}-1}$};
 \node[above=22pt] at (3.5,0.1) {$p_{j_1}$};
  \node[above=12pt] at (8.7,0.1) {$p_{{j_2}-1}$};
  \node[above=23pt] at (9.7,0.1) {$p_{j_2}$};
   \node[above=15pt] at (10.5,0.1) {$p_{{j_2}+1}$}; \node[above=5pt] at (12.5,0.1) {$p_{n}$};
\end{tikzpicture}}
\caption{\textit{An illustration of conservative prior distribution \eqref{eqn_priorCBIsolBer}}}
\label{fig_worst_prior}
\end{figure}
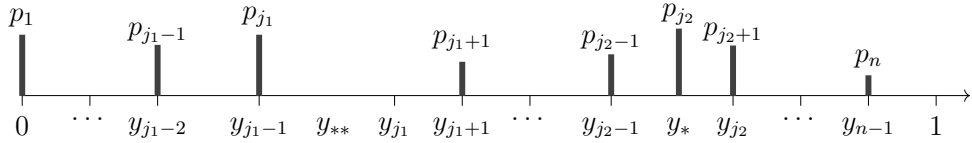

\section{Discussion}
\label{sec_discussion}
\subsection{Bayesian Inference for Conservative Assessments}
Theorem~\ref{thm_gen_sol} and Corollary~\ref{cor_special_case_no_failure_solution} illustrate a conservative application of Bayesian inference. These results determine the smallest possible posterior predictive probability of the software successfully handling $m$ demands in the future, after observing the software's past performance and accounting for any evidence that justifies only a partial specification of prior beliefs about the \acrshort*{pfd}. 
The assessor's partially specified beliefs define a collection of prior distributions that each satisfy this specification. So, each prior is consistent with the available reliability evidence. These ``feasible'' priors are updated via traditional Bayesian inference, producing a collection of posterior distributions and a ``worst-case'' value (from this collection) for a posterior probability of interest. Using inference in this way to support conservative assessments is so-called \acrshort*{cbi}.


The conservative prior distributions \eqref{eqn_priorCBIsolBer} and \eqref{eqn_priorCBIsolBerNoFailures} are ``degenerate'': i.e. discrete priors for a parameter that can be any value in the unit interval. Most values of the parameter~---~of the \acrshort*{pfd}~---~are impossible with these priors. These priors do \emph{not} represent the assessor's beliefs \emph{per se}, but represent beliefs (consistent with the assessor's) that conservatively interpret the evidence for inference. The assessor's actual beliefs are represented by the constraints on the collection of feasible priors.

The posterior predictive probabilities under \acrshort*{cbi} ``converge'' to the values obtained under traditional Bayesian inference, as the partial specification of the prior approaches that of a fully specified prior (\emph{cf.} convergence of $\epsilon$-contaminations to traditional Bayesian inference, in \cite{berger1990_SensitivityToPrior,moreno1991robust} for instance, as $\epsilon\to 0$). For example, suppose the constraints in \eqref{eqn_morenoetalBer} are a partial specification of a particular Beta distribution. Further suppose a sequence of increasingly more detailed partial specifications of this Beta distribution, with the number of ``$I_i$'' intervals growing to infinity while the intervals' maximum size shrinks to zero. Then, by the \emph{monotone} \emph{convergence} \emph{theorem}, both the numerator and denominator of the objective function in \eqref{eqn_morenoetalBer} converge to definite integrals w.r.t. the Beta distribution. So, the infimum in \eqref{eqn_morenoetalBer} converges to the posterior predictive probability (of the software surviving $m$ demands without failure) resulting from using the Beta prior distribution. 

The partial specification does not need to be in terms of parameter intervals (see generalizations as measurable sets in  \cite{moreno1991robust}). Some alternative specifications include perturbations of priors (so-called $\epsilon$-contaminations) and moments; e.g. see  \cite{berger1994_robustBayesianOverview},  \cite{BetroRuggeriMeczarski1994-JSPI},\cite{Salako_QEST_2020}. In  \cite{SalakoZhao_TSE_2023}, a partial specification of a joint prior distribution states that one of its marginal distributions is continuous (without specifying a distribution family). 

Strictly speaking, the conservative prior distribution may not be feasible (i.e. it does not satisfy all of the problem constraints). Rather, it is the limit of feasible priors. For example, for the conservative prior \eqref{eqn_priorCBIsolBer}, probability $p_i$ might be assigned to the lower endpoint of interval $I_i$ despite $I_i$ being open at its lower endpoint.


When applying these results, Theorem~\ref{thm_gen_sol} and Corollary~\ref{cor_special_case_no_failure_solution} are useful in two ways. Firstly, they improve numerical solution-finding routines, by focusing the routines to be more targeted in their search for solutions. For instance, limiting the search to those functional forms that satisfy the fixed-point system of equations in Theorem\ \ref{thm_gen_sol}; the optimization becomes a root-finding problem. Secondly, the solutions' properties give insight into the practical meaning and implications of these solutions; e.g. insight into an assessor's conservatism and the claims the assessor can justify.

These results reemphasize a general observation made in previous works: that \acrshort*{cbi} conservatism is relative, not absolute. The conservative values for different posterior predictive probabilities are achieved using different priors. A prior is conservative, or not, depending on what is observed, on which beliefs are justified, on what statistical model is used, and on what claim is being made. For instance, the conservative priors \eqref{eqn_priorCBIsolBer} and \eqref{eqn_priorCBIsolBerNoFailures} differ from those in  \cite{zhao_assessing_2019}, because the posterior predictive probabilities differ: this paper is concerned with the posterior probability of the software successfully handling future demands, while that paper considers posterior credible bounds on \emph{pfd} $X$. For more example posterior predictive probabilities and their conservative priors, 
see  \cite{zhao_conservative_2015},  \cite{kizito_QRE_2024}.

Conservatism, rather than robustness, is this paper's focus~---~the difference between the two is worth highlighting. \emph{Robust Bayesian inference} techniques evaluate the sensitivity of the outputs of inference to changes in, or ignorance about, the inputs to inference~---~inputs such as the prior distribution and the likelihood (see \cite{1991_Lavine}, \cite{berger1994_robustBayesianOverview}). These techniques also utilise collections of Bayesian inference problems, to determine upper and lower bounds on posterior probabilities of interest. The gap between these bounds quantifies how robust the inference is, so that a sufficiently small gap could mean a certain level of ignorance about the prior or likelihood is acceptable. There are also approaches to robust inference using \emph{imprecise probabilities}, involving convex collections of prior distributions and ranges of values for probabilities (see \cite{walley1991statistical}, \cite{Augustin2014}). Contrastingly, conservatism, as presented in this paper, attempts to ``\emph{err-on-the-side-of-caution}'' by determining if the worst-possible value for a posterior probability is, nevertheless, acceptable. This can be viewed as a special case of \emph{conditional $\Gamma$-maximin decision making} with two actions~---~``accept'' and ``reject''~---~ and two corresponding ``utility'' functions~---~respectively, the unknown probability of the software correctly handling $m$ demands (i.e. $(1-X)^m$) and some acceptable smallest value for this probability (see \cite{BetroRuggeri1992-ConditionalGammaMinimax} and \cite{Vidakovic2000-GammaMinimax} for general conditional $\Gamma$-minimax treatments in terms of losses rather than utilities). All of these related, philosophically different, approaches employ similar techniques to address model misspecification, belief elicitation, and uncertainty-aware decision-making challenges.

\acrshort*{cbi} can be used to check the robustness of \acrshort*{cbi} results that rely on an assumed statistical model. For example, problem\ \eqref{eqn_morenoetalBer} assumes the software has \acrshort*{iid} outcomes on randomly occurring demands. However, in many situations, there are good reasons to expect software failures and successes to be correlated. How sensitive are \acrshort*{cbi}'s conservative results to deviations from the \acrshort*{iid} model? By using a more general statistical model that has the \acrshort*{iid} model as a special case (i.e. Klotz's model in \cite{klotz_statistical_1973}),  \acrshort*{cbi} is used in  \cite{SalakoZhao_TSE_2023,kizito_QRE_2024} to show that small deviations can have a significant impact.

\subsection{Properties of CBI Solutions}
The infimum of the objective function in \eqref{eqn_morenoetalBer} can only take on certain functional forms determined by the gradient of the objective function. There are $n$ preferred locations in the prior's domain where the $p_i$ probabilities should be assigned; the $i$-th location is a limit point of constraint interval $I_i$. All previously published \acrshort*{cbi} solutions have each been constrained by finitely many such functional forms. 

Multiparameter generalizations of \eqref{eqn_morenoetalCBIprob} have analogous conservative priors to \eqref{eqn_priorCBIsolBer} and \eqref{eqn_priorCBIsolBerNoFailures}: the priors remain discrete but become multidimensional, they agree with the univariate conservative priors in appropriate limits, and they also result from unique fixed points.  \cite{littlewood_reliability_2020} contains some examples. Other multiparameter problems, with objective functions that are posterior credible bounds on \acrshort*{pfd}, can be found in  \cite{zhao_assessing_2020},  \cite{salako_conservative_2021},  \cite{kizito_QRE_2024}. The conservative priors remain discrete, multidimensional, but have no explicit fixed-point in the \acrshort*{cbi} solution statement.

Whether the fixed point dependence is implicit or explicit in the solution, a common theme across these solutions is that the fixed points behave like ``attractors'' or ``repellers''~---~attracting or repelling locations where probability masses are assigned in the domain of the prior. The optimization constraints induce a partition of the prior's domain, and the fixed points dictate those locations where probability mass should be assigned (in each subset of the partition) to achieve conservative results. For some \acrshort*{cbi} problems, the fixed point exclusively attracts or repels. For other problems, the fixed point attracts locations in some subsets while repelling locations in other subsets. An example is the five solutions in the proof of the theorem in  \cite{zhao_assessing_2019}; the placement of probability masses for each of these solutions is determined by the attraction and repulsion of the fixed point $r/(r+k)$ (i.e. the MLE for $x^r(1-x)^{k}$). 

Theorem~\ref{thm_gen_sol}'s fixed-point triplet~---~i.e. $\phi^*$, $y_{**}$, $y_*$~---~are related by the function $h: [0,1]\setminus\{\frac{r}{r+k}\}\rightarrow \mathbb R$, $h(x) = (1 - x)^m\left(\frac{r - x(m + k + r)}{r - x(k + r)}\right)$; see Figure~\ref{fig_h_stationarypoints}. The gradient of the objective function defines $h$, where $h$ determines when gradient components are non-trivially zero. The roots of the difference $h-\phi$, between $h$ and the objective function $\phi$, imply the fixed point ``attracts'' (w.r.t. $y_{*}$) and ``repels'' (w.r.t. $y_{**}$). Similarly, for Corollary \ref{cor_special_case_no_failure_solution}, $y_{*}$ ``attracts'' (see Appendix~B, Supplementary Material).

The ``attractors'' and ``repellers'' are not necessarily ``points''; they can be topological manifolds in the prior's domain. In \cite{SalakoZhao_TSE_2023}, a joint prior is constrained to have a continuous marginal. Consequently, the conservative joint prior has line-segments in its domain that ``attract'' univariate probability density. Similar degenerate solutions occur for multidimensional $\epsilon$-contamination problems; e.g. solutions involving 2-dimensional extreme priors with support on 1-dimensional curves in the domain of the priors~---~ see Theorem 1 and Remark 1 of \cite{LiseoMorenoSalinetti1996-GivenMarginals}, Remark 2 of \cite{MorenoCano1995-CollectionOfSets}, and \cite{LavineWassermanWolpert1991-SpecifiedMarginals}.

\def\rval{1}
\def\mval{3}
\def\kval{2}
\def\xsing{\fpeval{\rval / (\rval + \kval)}}
\def\xroot{\fpeval{\rval / (\rval + \mval + \kval)}}
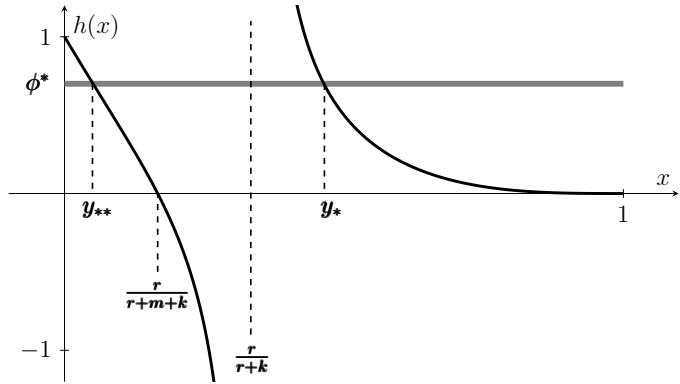
\begin{figure}[htbp!]
\begin{center}
\scalebox{0.78}
{\begin{tikzpicture}
  \begin{axis}[
    width=13cm,
    height=8cm,
    domain=0:1,
    samples=400,
    xlabel={$x$},
    ylabel={$h(x)$},
    axis lines=middle,
    xtick distance=1,
    ytick distance=1,
    enlargelimits=true,
    tick style={black},
    ymin=-1, ymax=1,
  ]

  \draw[line width=1mm,gray] ({axis cs:0,0.7}) -- ({axis cs:1,0.7});
  \node[black] at (axis cs:-0.01,0.7) [anchor=east] {$\pmb{\phi^*}$};

  \addplot[
    domain=0:\fpeval{\xsing - 0.001},
    line width = 0.5mm,
    black,
  ]
  {(1 - x)^\mval * (\rval - x*(\mval + \kval + \rval)) / ( \rval - x*(\kval + \rval) )};

  \addplot[
    domain=\fpeval{\xsing + 0.001}:0.999,
    line width = 0.5mm,
    black,
  ]
  {(1 - x)^\mval * (\rval - x*(\mval + \kval + \rval)) / ( \rval - x*(\kval + \rval) )};


  \draw[dashed,thick,black] ({axis cs:\xsing,-0.9}) -- ({axis cs:\xsing,1.1});
  \node[black] at (axis cs:\xsing+0.003,-1.06) {$\pmb{\frac{r}{r+k}}$};


  \draw[dashed,thick,black] ({axis cs:\xroot,-0.5}) -- ({axis cs:\xroot,0});
  \node[black] at (axis cs:\xroot,-0.65) {$\pmb{\frac{r}{r+m+k}}$};

  \draw[dashed,thick,black] ({axis cs:0.05,0.7}) -- ({axis cs:0.05,0});
  \node[black] at (axis cs:0.06,-0.1) {$\pmb{y_{**}}$};

  \draw[dashed,thick,black] ({axis cs:0.465,0.7}) -- ({axis cs:0.465,0});
  \node[black] at (axis cs:0.479,-0.1) {$\pmb{y_*}$};

  \end{axis}
\end{tikzpicture}}
\end{center}\caption{\textit{How $\phi^*$, $y_{**}$, and $y_*$, are related by $h(x) = (1 - x)^m\left(\frac{r - x(m + k + r)}{r - x(k + r)}\right)$.}} 
\label{fig_h_stationarypoints}
\end{figure}

Problem\ \eqref{eqn_morenoetalBer} is appropriate for safety-critical applications, where \emph{any} software failure is unacceptable. For less critical applications, future failures might be acceptable or, even, inevitable. A generalisation of \eqref{eqn_morenoetalBer} considers, at most, $l$ future failures:  
\begin{align}
\label{eqn_obj_cappedfuturefailures}
\underset{x_1\in I_1,\ldots,x_n\in I_n}{\inf}\frac{\sum_{i=1}^{n}\sum_{s=0}^l{m\choose s}x_i^{r+s}(1-x_i)^{m+k-s} p_i}{\sum_{i=1}^{n}x_i^r(1-x_i)^{k} p_i}\,,
\end{align}
where $l<m$. The solution to \eqref{eqn_obj_cappedfuturefailures} is also governed by a fixed point with an attractor and repeller, like Theorem\ \ref{thm_gen_sol}. We focus on the simpler problem \eqref{eqn_morenoetalBer} to ease the exposition, and because \eqref{eqn_morenoetalBer} reveals the salient solution characteristics for this class of \acrshort*{cbi} problems.

\subsection{Practical Applications and Implications}
The infimum in Theorem~\ref{thm_gen_sol}, i.e. $\phi^*$, behaves as expected given the practical context. When positive operational evidence (of the software being very reliable) mounts~---~i.e. as the number, $k$, of observed successfully handled demands increases~---~$\phi^*$ increases monotonically. While, with more negative operational evidence~---~so, as the number, $r$, of software failures increases~---~$\phi^*$ monotonically reduces. Note:  $0\leqslant \phi^*\leqslant (1-y_2)^m$, where $0$ and $(1-y_2)^m$ are the limits of $\phi^*$ for, respectively, large $r$ and $k$ (see Appendix~C, Supplementary Material).


In practice, when failures and successes are observed (i.e. $r>0,\ k>0$), the constraints in problem~\eqref{eqn_morenoetalBer} \emph{must} have three (or more) intervals; otherwise $\phi^*=0$, no matter how many successes $k$ are observed! This is because $x_1$ must always be $0$ (see prior~\eqref{eqn_priorCBIsolBer}) and, with two intervals, $\phi=(1-x_2)^m$ so that $\phi\to 0$ as $x_2\to 1$. Hence, setting $x_2=1$ ensures $\phi^*=0$. One's initial reaction upon seeing $\phi^*=0$ might be that this is ``too'' conservative. Surely there must be \emph{some} non-zero probability of the software correctly handling the future demands~---~particularly when it has already correctly handled most of the demands in the past? This reaction is misguided; it ignores the fact that \acrshort*{cbi} can only be as conservative as the assessor's limited beliefs and observations will allow, so it highlights what unfavorable claims can be justified if the evidence is not strong enough to invalidate the claim. To preclude this $\phi^*=0$ possibility, the assessor must have evidence that strongly suggests the software is very reliable (so $y_2\ll1$) but not perfect (so $y_1>0$). In the rest of our discussion, applications of Theorem~\ref{thm_gen_sol} will involve constraints with three or four intervals. 
\begin{sidewaystable}[htbp!]
    \centering
    \resizebox{0.99\textwidth}{!}{\begin{tabular}{ccccccccccccc}
        \hline
        \multirow{4}{*}{$\begin{array}{c}\textbf{Number}\\\textbf{of failures} \\(r)\end{array}$} 
        & \multicolumn{4}{c}{\(\boldsymbol{m = 46,\ \alpha = 0.009895}\)} 
        & \multicolumn{4}{c}{\(\boldsymbol{m = 500,\ \alpha = 0.097982}\)} 
        & \multicolumn{4}{c}{\(\boldsymbol{m = 1000,\ \alpha = 0.178476}\)} \\
        \cline{2-13}
        & $\begin{array}{c}\textbf{Beta}\\\textbf{Prior}\end{array}$  
        & \multicolumn{3}{c}{\textbf{CBI Priors}} 
        &  $\begin{array}{c}\textbf{Beta}\\\textbf{Prior}\end{array}$
        & \multicolumn{3}{c}{\textbf{CBI Priors}} 
        &  $\begin{array}{c}\textbf{Beta}\\\textbf{Prior}\end{array}$
        & \multicolumn{3}{c}{\textbf{CBI Priors}} \\
        \cline{2-5} \cline{6-9} \cline{10-13}
        &  
        & \(\boldsymbol{y_2=10^{-4}}\) & \(\boldsymbol{y_2=2 \times 10^{-5}}\) & \(\boldsymbol{y_2=10^{-5}}\) 
        &  
        & \(\boldsymbol{y_2=10^{-4}}\) & \(\boldsymbol{y_2=2 \times 10^{-5}}\) & \(\boldsymbol{y_2=10^{-5}}\) 
        &  
        & \(\boldsymbol{y_2=10^{-4}}\) & \(\boldsymbol{y_2=2 \times 10^{-5}}\) & \(\boldsymbol{y_2=10^{-5}}\) \\
        \hline
        0 & 4602 & 46861 & 39294 & 40440 & 4602 & 49322 & 41140 & 42247 & 4602 & 51882 & 43119 & 44174 \\
        1 & 9229 & 52320 & 59357 & 62567 & 9450 & 54417 & 61792 & 65156 & 9681 & 56634 & 64371 & 67900 \\
        2 & 13855 & 75760 & 82957 & 86230 & 14298 & 78771 & 86315 & 89744 & 14766 & 81955 & 89868 & 93466 \\
        3 & 18481 & 99569 & 106864 & 110175 & 19147 & 103509 & 111155 & 114625 & 19852 & 107674 & 115694 & 119334 \\
        4 & 23107 & 123607 & 130966 & 134304 & 23996 & 128486 & 136199 & 139697 & 24938 & 133641 & 141732 & 145402 \\
        5 & 27734 & 147800 & 155205 & 158562 & 28845 & 153623 & 161385 & 164903 & 30024 & 159777 & 167919 & 171609 \\
        6 & 32360 & 172105 & 179545 & 182916 & 33694 & 178878 & 186676 & 190209 & 35111 & 186034 & 194214 & 197921 \\
        7 & 36986 & 196495 & 203961 & 207343 & 38543 & 204221 & 212046 & 215592 & 40198 & 212383 & 220593 & 224312 \\
        8 & 41612 & 220951 & 228438 & 231829 & 43392 & 229633 & 237480 & 241035 & 45285 & 238804 & 247037 & 250767 \\
        9 & 46239 & 245460 & 252964 & 256362 & 48241 & 255100 & 262966 & 266528 & 50372 & 265283 & 273536 & 277273 \\
        \hline
    \end{tabular}}
    \caption{\textit{Comparison between the results from \cite{littlewood1997some} using a uniform (Beta) prior, and the results from using \acrshort*{cbi} priors. The \acrshort*{cbi} problem constraints, including the choice of $y_2$, are consistent with the Beta prior. The table lists the total number of demands needed (including $r$ unsuccessful demands) to justify a posterior reliability value of $(1-\alpha)$ for $m$ future successes. Here, $y_1=10^{-6}$, $p_1=y_1$,\, $p_2=y_2-y_1$ and $p_3=1-y_2$.} 
    }
    \label{tab_grouped_by_m_alpha_y2}
\end{sidewaystable}

Typically, to statistically demonstrate that safety-critical on-demand software is sufficiently reliable, such software is required to successfully handle a predetermined number of random demands without failure. If a failure occurs, one might think that the conservative thing to do is to stop the software execution, fix the failure-causing fault in the software, then restart the software demonstration ignoring the software's past performance on the previous demands. However,  \cite{littlewood1997some} show how it's more conservative to take the failure and past successes into account, and assume the fix has not changed the software's reliability, when deciding how many more demands the software should be subjected to. Using a uniform (Beta) prior distribution of \acrshort*{pfd}, $\text{B}(1,1)$, they compute the total number of demands needed to demonstrate a posterior probability, $(1-\alpha)$, of the software correctly handling $m$ future demands, upon observing the software fail $r$ and successfully handle $k$ past demands. Table~\ref{tab_grouped_by_m_alpha_y2} compares these demand numbers with what Theorem~\ref{thm_gen_sol} stipulates they should be, using partial specifications of the uniform prior in terms of $3$ intervals; for consistency with the uniform prior, $p_1=y_1$, $p_2=y_2-y_1$, $p_3=1-y_2$.

\begin{figure}[htbp!]
    \begin{minipage}[]{0.47\linewidth}
	\includegraphics[width=\linewidth]{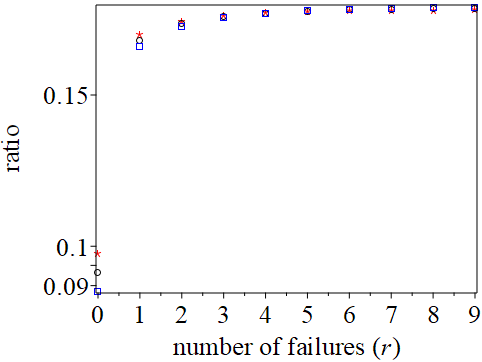}
	\caption{\textit{Ratio of the number of demands needed to demonstrate $(1-\alpha)$ reliability bound with $y_1=10^{-6}$, $y_2=10^{-4}$.} 
    }
	\label{fig_comp_1996_paper1}
\end{minipage}\hfill
\begin{minipage}[]{0.47\linewidth}
	\includegraphics[width=\linewidth]{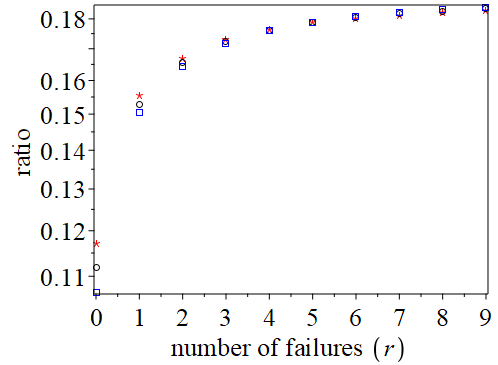}
	\caption{\textit{Same parameters as Figure~\ref{fig_comp_1996_paper1} except $\boldsymbol{y_2 = 2\times10^{-5}}$.\\}
    }
	\label{fig_comp_1996_paper2}
\end{minipage}\hfill
\hfill
\begin{minipage}{0.47\linewidth}  
	\includegraphics[width=\linewidth]{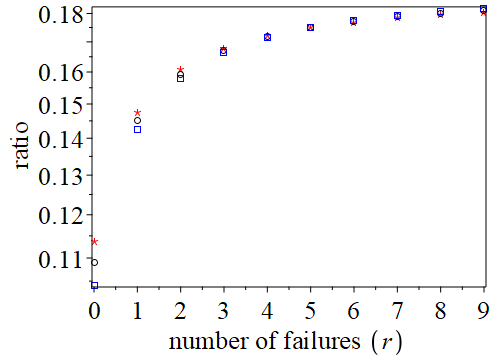}
	\caption{\textit{Same parameters as Figure~\ref{fig_comp_1996_paper1} except $\boldsymbol{y_2 = 10^{-5}}$
    }
    }
	\label{fig_comp_1996_paper3}
\end{minipage}\hfill
\begin{minipage}{0.43\linewidth}  
	\includegraphics[width=\linewidth]{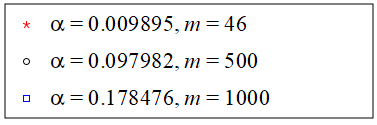}\vspace{2.0cm}
\end{minipage}
\end{figure}

The triplet of $(m,\alpha)$ values in the table~---~i.e. $(46,0.009895),\ (500, 0.097982)$ and $(1000,0.178476)$~---~are chosen for illustration because, under the Beta prior, they each require the software to successfully handle $4602$ demands without failure. This illustrates a range of claims that can be made with a fixed demand budget using the Beta prior. Contrastingly, the conservative \acrshort*{cbi} prior changes, depending on $m$, $\alpha$, and the constraining intervals in \eqref{eqn_morenoetalCBIprob}, so that different demand budgets are required even when no failures are observed. For instance, with $y_2=10^{-4}$, the required number of demands without failure under \acrshort*{cbi} is $46861$, $49322$, and $51882$. 


The table shows how the level of conservatism matters: i.e. the total number of required demands (with failures) is noticeably different, depending on whether the Beta prior, versus a \acrshort*{cbi} prior, is used. For the (small) numbers of failures in the table, the required number of demands under \acrshort*{cbi} is roughly an order of magnitude larger than the required number for the Beta prior. Using the table data, Figures~\ref{fig_comp_1996_paper1}, \ref{fig_comp_1996_paper2} and \ref{fig_comp_1996_paper3} depict how each ratio~---~the number of demands for the Beta prior divided by the number for a \acrshort*{cbi} prior~---~grows as the number of failures increases. In each figure, each ratio monotonically approaches an upper bound and stabilizes after only a few failures.

The upper bounds suggest conservatism still matters when $r$ is large. If $k_{\beta}$ and $k_C$ are the required amounts of failure-free operation (according to the Beta and \acrshort*{cbi} priors), to demonstrate the software continues to meet the $(1-\alpha)$ reliability requirement as $r$ increases, numerical estimation suggests $\lim\limits_{r\to\infty}k_{\beta}/r<\lim\limits_{r\to\infty}k_{C}/r<\infty$. Analytically, as $r\to\infty$, $k_{\beta}\to\infty$ and $\frac{r}{r+k_{\beta}}\to1-(1-\alpha)^{1/m}$ with the Beta prior. With a \acrshort*{cbi} prior, the asymptotic behavior of $r/(r+k_C)$ is more interesting. As $r$ and (therefore) $k_C$ increase, $\phi^*$ remains equal to the value $(1-\alpha)$, while $\frac{r}{r+m+k_{C}}$ and $\frac{r}{r+k_{C}}$ tend to the same value. The $h$ function tends to $(1-x)^m$, except at the limit of $\frac{r}{r+k_C}$ where $h$ will be undefined. The relationship in Figure~\ref{fig_h_stationarypoints}, among the fixed-point triplet, implies the difference ($h-\phi^*$) has limiting roots: either 
$y_*\to (1-{\phi^*}^{\frac{1}{m}})$ and $y_{**}\to\frac{r}{r+k_{C}}$ (e.g. see Figure~\ref{fig_convergence_of_h}), or the limits are switched. In general, $k_\beta/r$ and $k_C/r$ have different limits (see Appendices~D, E, Supplementary Material); see numerical examples of $y_*$ and $y_{**}$ convergence in Figures~\ref{fig_comp_stationary_points_1}, \ref{fig_comp_stationary_points_2}, \ref{fig_comp_stationary_points_3}, plotted using $(m,\alpha)$ from Table~\ref{tab_grouped_by_m_alpha_y2}.
%
%
%
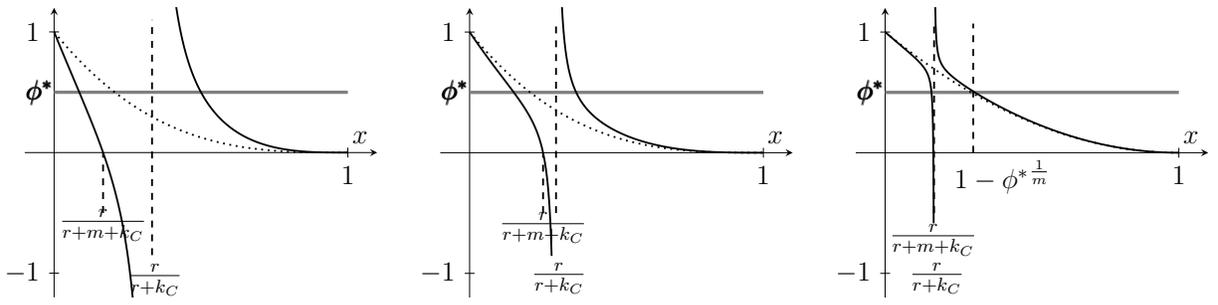
\begin{figure}[htbp!]
\centering
\begin{minipage}{0.32\textwidth}
\def\rval{1}
\def\mval{3}
\def\kval{2}
\def\xsing{\fpeval{\rval / (\rval + \kval)}}
\def\xroot{\fpeval{\rval / (\rval + \mval + \kval)}}

\resizebox{\linewidth}{!}{%
\begin{tikzpicture}
  \begin{axis}[height=6cm, domain=0:1, samples=400,
    xlabel={$x$}, axis lines=middle,
    xtick distance=1, ytick distance=1,
    enlargelimits=true, tick style={black}, ymin=-1, ymax=1
  ]
    \draw[line width=0.5mm,gray] ({axis cs:0,0.5}) -- ({axis cs:1,0.5});
    \node[black] at (axis cs:0.03,0.5) [anchor=east] {$\pmb{\phi^*}$};

    \addplot[domain=0:\fpeval{\xsing - 0.001}, thick, black]
    {(1 - x)^\mval * (\rval - x*(\mval + \kval + \rval)) / (\rval - x*(\kval + \rval))};
    \addplot[domain=\fpeval{\xsing + 0.001}:0.999, thick, black]
    {(1 - x)^\mval * (\rval - x*(\mval + \kval + \rval)) / (\rval - x*(\kval + \rval))};

    \draw[dashed, thick] ({axis cs:\xsing,-0.85}) -- ({axis cs:\xsing,1.1});
    \node[black] at (axis cs:\xsing+0.01,-1.05) {$\frac{r}{r+k_C}$};
    \draw[dashed, thick] ({axis cs:\xroot,-0.5}) -- ({axis cs:\xroot,0});
    \node[black] at (axis cs:\xroot,-0.6) {$\frac{r}{r+m+k_C}$};

    \addplot[domain=0:1, dotted, thick] {(1 - x)^\mval};
  \end{axis}
\end{tikzpicture}
}
\end{minipage}
\hfill
\begin{minipage}{0.32\textwidth}
\def\rval{5}
\def\mval{3}
\def\kval{12}
\def\xsing{\fpeval{\rval / (\rval + \kval)}}
\def\xroot{\fpeval{\rval / (\rval + \mval + \kval)}}

\resizebox{\linewidth}{!}{%
\begin{tikzpicture}
  \begin{axis}[height=6cm, domain=0:1, samples=400,
    xlabel={$x$}, axis lines=middle,
    xtick distance=1, ytick distance=1,
    enlargelimits=true, tick style={black}, ymin=-1, ymax=1
  ]
    \draw[line width=0.5mm,gray] ({axis cs:0,0.5}) -- ({axis cs:1,0.5});
    \node[black] at (axis cs:0.03,0.5) [anchor=east] {$\pmb{\phi^*}$};

    \addplot[domain=0:\fpeval{\xsing - 0.015}, thick, black]
    {(1 - x)^\mval * (\rval - x*(\mval + \kval + \rval)) / (\rval - x*(\kval + \rval))};
    \addplot[domain=\fpeval{\xsing + 0.001}:0.999, thick, black]
    {(1 - x)^\mval * (\rval - x*(\mval + \kval + \rval)) / (\rval - x*(\kval + \rval))};

    \draw[dashed, thick] ({axis cs:\xsing,-0.5}) -- ({axis cs:\xsing,1.1});
    \node[black] at (axis cs:\xsing+0.01,-1.04) {$\frac{r}{r+k_C}$};
    \draw[dashed, thick] ({axis cs:\xroot,-0.5}) -- ({axis cs:\xroot,0});
    \node[black] at (axis cs:\xroot,-0.62) {$\frac{r}{r+m+k_C}$};

    \addplot[domain=0:1, dotted, thick] {(1 - x)^\mval};
  \end{axis}
\end{tikzpicture}
}
\end{minipage}
\hfill
\begin{minipage}{0.32\textwidth}
\def\rval{20}
\def\mval{2}
\def\kval{100}
\def\xsing{\fpeval{\rval / (\rval + \kval)}}
\def\xroot{\fpeval{\rval / (\rval + \mval + \kval)}}

\resizebox{\linewidth}{!}{%
\begin{tikzpicture}
  \begin{axis}[height=6cm, domain=0:1, samples=400,
    xlabel={$x$}, axis lines=middle,
    xtick distance=1, ytick distance=1,
    enlargelimits=true, tick style={black}, ymin=-1, ymax=1
  ]
    \draw[line width=0.5mm,gray] ({axis cs:0,0.5}) -- ({axis cs:1,0.5});
    \node[black] at (axis cs:0.03,0.5) [anchor=east] {$\pmb{\phi^*}$};

    \addplot[domain=0:\fpeval{\xsing - 0.0015}, thick, black]
    {(1 - x)^\mval * (\rval - x*(\mval + \kval + \rval)) / (\rval - x*(\kval + \rval))};
    \addplot[domain=\fpeval{\xsing + 0.001}:0.999, thick, black]
    {(1 - x)^\mval * (\rval - x*(\mval + \kval + \rval)) / (\rval - x*(\kval + \rval))};

    \draw[dashed, thick] ({axis cs:\xsing,-0.5}) -- ({axis cs:\xsing,1.1});
    \node[black] at (axis cs:\xsing+0.01,-1.04) {$\frac{r}{r+k_C}$};
    \draw[dashed, thick] ({axis cs:\xroot,-0.1}) -- ({axis cs:\xroot,0});
    \node[black] at (axis cs:\xroot,-0.72) {$\frac{r}{r+m+k_C}$};

    \addplot[domain=0:1, dotted, thick] {(1 - x)^\mval};

    \draw[dashed, thick] ({axis cs:0.3,0}) -- ({axis cs:0.3,1.07}); node[below=5pt] {\space};
   \node[black] at (axis cs:0.4,-0.2) {${1-{\phi^*}}^{\frac{1}{m}}$};
  \end{axis}
\end{tikzpicture}
}
\end{minipage}

\caption{\textit{Example $h$ converging to $(1-x)^m$ almost everywhere as $r\to\infty$ and $k_C\to\infty$.}
}
\label{fig_convergence_of_h}
\end{figure}
\begin{figure}[htbp!]
    \begin{minipage}[]{0.47\linewidth}
	\includegraphics[width=\linewidth]{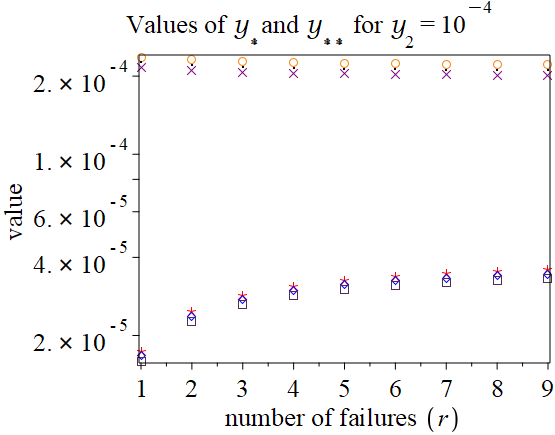}
	\caption{\textit{For each $(m,\alpha)$ in Table~\ref{tab_grouped_by_m_alpha_y2}, as $r$ and $k_C$ grow to infinity, $y_{**}\to\frac{r}{r+k_C}$ and $y_{*}\to 1-(1-\alpha)^{1/m}$.}}
	\label{fig_comp_stationary_points_1}
\end{minipage}\hfill
\begin{minipage}[]{0.47\linewidth}
	\includegraphics[width=\linewidth]
    {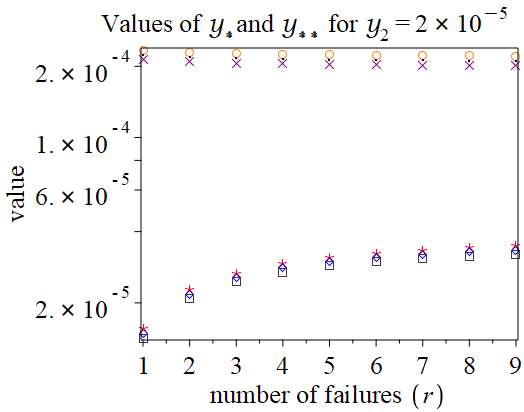}
	\caption{\textit{The same parameters as Figure~\ref{fig_comp_stationary_points_1} except $\boldsymbol{y_2 = 2\times10^{-5}}$.\\}}
	\label{fig_comp_stationary_points_2}
\end{minipage} 
\begin{minipage}{0.47\linewidth} 
	\includegraphics[width=\linewidth]{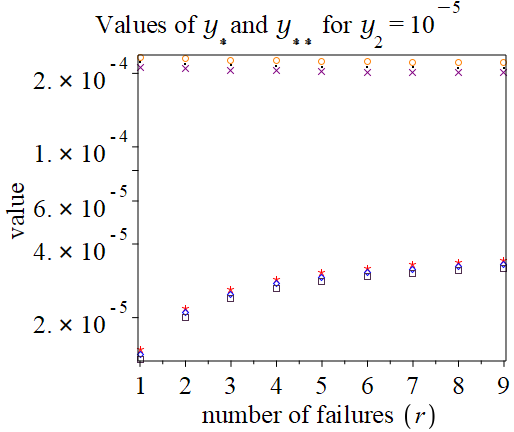}
	\caption{\textit{The same parameters as Figure~\ref{fig_comp_stationary_points_1} except $\boldsymbol{y_2 = 10^{-5}}$}}
	\label{fig_comp_stationary_points_3}
\end{minipage}\hfill
\begin{minipage}{0.37\linewidth}  
	\includegraphics[width=\linewidth]{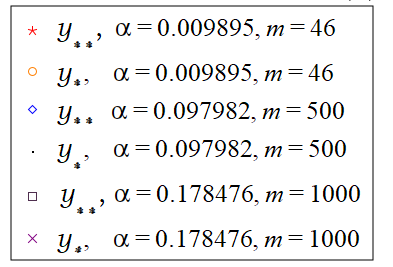}\vspace{2cm}
\end{minipage}
\end{figure}

As expected, in Table\ \ref{tab_grouped_by_m_alpha_y2}, the number of required demands increases with an increase in failures, under both the Beta and \acrshort*{cbi} priors. However, unlike the Beta prior, under \acrshort*{cbi}, there are $y_2$ values for which \emph{any} failure implies \emph{no amount} of failure-free operation is enough to demonstrate the $(1-\alpha)$ posterior probability~---~even an infinite amount won't suffice! This is true for any $y_2>1-(1-\alpha)^{\frac{1}{m}}$, because the aforementioned upper limit of $\phi^*$ is $(1-y_2)^m$ as $k\to\infty$; here, $\phi^*=1-\alpha$. Table\ \ref{tab_grouped_by_m_alpha_y2} excludes such $y_2$ values.

We have highlighted trends in Table~\ref{tab_grouped_by_m_alpha_y2} that do not change when $y_2$ is altered
. Although the number of demands appears to increase with decreasing $y_2$ when there \emph{are} failures, the number of demands is \emph{not} monotonic w.r.t. $y_2$ when there are \emph{no} failures. For instance, when $m=46$ and $\alpha=0.009895$, as $y_2$ decreases from $10^{-4}$ to $10^{-5}$, the total number of demands reduces from $46861$ to $39394$, then increases to $40440$. This non-monotonic behavior is unsurprising, since altering $y_2$ alters intervals $I_2$, $I_3$, and probabilities $p_2$, $p_3$.

There are situations where the software being free from faults is a real possibility (see  \cite{zhao_conservative_2015,zhao_conservative_2018},  \cite{strigini_software_2013}). In such cases, when no failures are observed, the appropriate \acrshort*{cbi} solution is the limit of Corollary~\ref{cor_special_case_no_failure_solution} as $y_1\to 0$, with probability $p_1$ of ``fault-freeness'': i.e. the limit
\begin{align}
\phi^*=\frac{p_1 + \sum_{i=2}^{j-1}(1-y_i)^{m+k}p_i + (1-y_*)^{m+k}p_j + \sum_{i=j+1}^{n}(1-y_{i-1})^{m+k}p_i}{p_1 + \sum_{i=2}^{j-1}(1-y_i)^kp_i + (1-y_*)^kp_j + \sum_{i=j+1}^{n}(1-y_{i-1})^kp_i}\,.
\label{eqn_CBI_faultfree}
\end{align}Appendix~F of the Supplementary Material derives this limit. \cite{strigini_software_2013} consider a special case of \eqref{eqn_CBI_faultfree}, with one constraining point and one constraining interval; i.e. the constraints $\mathbb P(X=0)=p_1$ and $\mathbb P(0<X\leqslant 1)=1-p_1$. 

Strigini and Povyakalo remark that any observed failure implies $\phi^*=0$. This is an instance of the observation we made earlier in this section: in general, at least three constraining intervals and the probability of fault-freeness are required, to prevent $\phi^*$ being necessarily zero. In fact, in the limiting ``fault-freeness'' case of Theorem~\ref{thm_gen_sol}, i.e. in the limit of the Theorem\ \ref{thm_gen_sol} solution as $y_1\to 0$, the form of $\phi^*$ is unchanged except for probability masses $p_1$ and $p_2$ being assigned to $x=0$. Thus, only constraining intervals $I_3$, and above, contribute non-trivially to $\phi^*$. As such, with only two intervals and the ``fault-freeness'' probability, the objective function simply becomes $\phi=(1-x_3)^m$, which tends to $0$ as $x_3$ tends to $1$. Hence, $\phi^*=0$. Intuitively, to be conservative, the assessor rules out the software being fault-free if any failure is observed, and considers any failure extremely strong evidence of the software being arbitrarily close to completely unreliable (i.e. arbitrarily close to the \acrshort*{pfd} being $1$).

Otherwise, with at least three intervals and a ``fault-freeness'' probability, $\phi^*$ grows as more failure-free operation is observed. Figures~\ref{fig_comp_2013_paper} and \ref{fig_comp_2013_paper1} compare how quickly $\phi^*$ grows with increasing $k$ (keeping $m$ fixed)\footnote{Since the ratio $m/k$ is used to generate the graphs, the graphs also illustrate how $\phi^*$ reduces as the required number of successful future demands $m$ increases (keeping $k$ fixed).}, comparing when there are no failures (from Strigini et al. with one constraining interval) with when there are \emph{some} (with three constraining intervals). The $p_i$ probabilities are chosen for illustration as $p_1=0.9$, $p_2=0.09$, $p_3=0.009$, and $p_4=0.0001$. Figure\ \ref{fig_comp_2013_paper} uses $y_1=0$, $y_2=10^{-6}$, $y_3=10^{-5}$, while Figure\ \ref{fig_comp_2013_paper1} uses the less stringent $y_1=0$, $y_2=10^{-5}$ and $y_3=10^{-3}$.

\begin{figure}[htbp!]
    \begin{minipage}[]{0.475\linewidth}
	\includegraphics[width=0.95\linewidth]{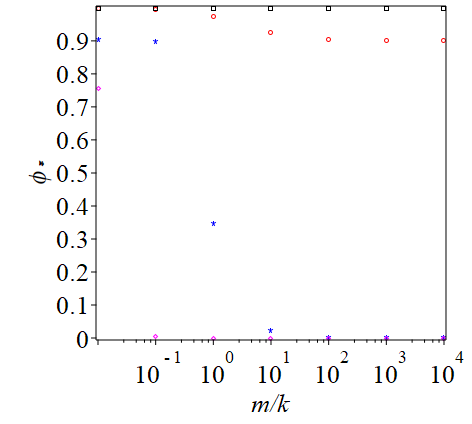}
	\caption{\textit{$\phi^*$ as successes $k$ increase from right to left with, and without, failures.}
    }
	\label{fig_comp_2013_paper}
\end{minipage}\hfill
\begin{minipage}{0.475\linewidth}
	\includegraphics[width=0.95\linewidth]{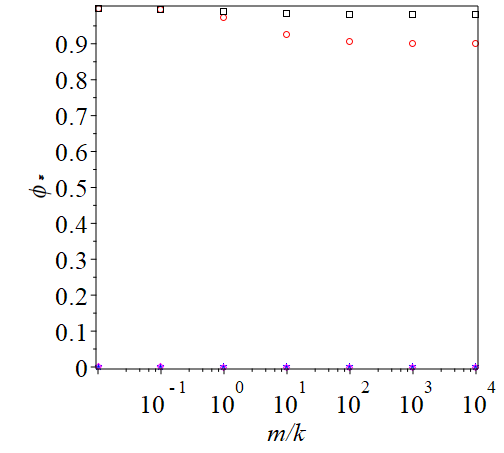}
	\caption{\textit{The same parameters as those in Figure~\ref{fig_comp_2013_paper} with less stringent $y_2$, $y_3$}}
	\label{fig_comp_2013_paper1}
\end{minipage}\hfill\centering\begin{minipage}{0.3\linewidth} 
	\includegraphics[width=0.9\linewidth]{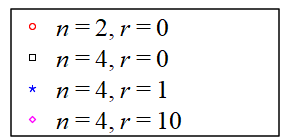}
\end{minipage}
\end{figure}

In these Figures the limits of $\phi^*$ differ for large $k$, depending on whether failures are observed or not. When no failures are observed, using $\phi^*$ from \eqref{eqn_CBI_faultfree}, $\phi^*$ grows to $1$ as $k\to\infty$ because of the probability of ``fault-freeness'' $p_1$. However, with failures, $\phi^*$ grows to $(1-y_3)^m$ instead, since $p_1$ and $p_2$ do not contribute to $\phi^*$ and the $y_i$ are ordered. In Figure~\ref{fig_comp_2013_paper}, $(1-y_3)^m$ is $0.905$, so the $r=1$ and $r=10$ curves grow to this limit. In Figure~\ref{fig_comp_2013_paper1}, these curves imperceptibly grow to the limit $4.5\times10^{-5}$. 

Similarly, as $m\to\infty$ with no failures, $\phi^*\to p_1/(p_1+p_2+\sum_{i=3}^{n}(1-y_{i-1})^kp_i)$. That is, for large $m$, $\phi^*$ shrinks to the smallest posterior probability of the software being fault-free. This follows from the limit of \eqref{eqn_CBI_faultfree} and the fact that $y_{*}\to0$ here (for a special case derivation, see  \cite{zhao_conservative_2015,zhao_conservative_2018}): the posterior probability is made as small as possible by making the denominator as large as possible, placing the $p_i$s at the lower endpoints of their respective intervals. In contrast, as $m\to\infty$ with failures, the $\phi^*$ in the ``fault-freeness'' limit of Theorem\ \ref{thm_gen_sol} tends to $0$.

\subsection{Limitations and Future Work}
\label{subsec_limitationsandFutureWork}
Like traditional Bayesian inference, \acrshort*{cbi} also requires some belief elicitation. The use of partial specifications of priors in \acrshort*{cbi} can be viewed as a ``conservative'' act~---~by design, this tries to prevent the use of unjustified beliefs for inference. Although this significantly reduces the traditional burden of eliciting a fully specified prior, some elicitation  questions remain under \acrshort*{cbi}, such as ``What is the best way to elicit partial specifications of priors, based on diverse forms of evidence?''. Can this form of (conservative) inference be shown to be more, or less, sensitive to certain forms of evidence? Such questions have been partially addressed in previous work (e.g.  \cite{zhao_conservative_2018,littlewood_reliability_2020},  \cite{salako_conservative_2021}), but more work is needed. One possible direction is the use of imprecise probabilities for the partial prior specifications, replacing the precise $p_i$ probabilities with intervals $[\underline{p_i},\overline{p_i}]$. See  \cite{walley1991statistical} and  \cite{Augustin2014} for general imprecise probability overviews, while  \cite{Coolen2004},  \cite{Utkin2007}, focus on its application to (continuously operating software) reliability assessment. 

Our example \acrshort*{cbi} application is the assessment of ``on-demand'' software, which executes only when needed (see  \cite{LittlewoodStrigini_1993}, \cite{Innal2010-PFD-PFH}); although we expect analogous results for continuously operating software, effects specific to continuous operation may yield meaningfully different outcomes.

Under fixed-point iteration, all the \acrshort*{cbi} problems\ \eqref{eqn_morenoetalBer} we have checked numerically quickly converge to the solutions. The fixed-point condition $h(y_*)=\phi^*$, along with attractor–repeller consistent prior-domain locations, are the fixed-point solution of the parametric problem $\sum_{i=1}^n \min_{x_i\in I_i}\{p_ix_i^r(1-x_i)^{k}((1-x_i)^m-\phi)\}$ where $\phi$ is the objective function in \eqref{eqn_morenoetalBer}. Dinkelbach-type iteration converges to $\phi^*$ from any start $\hat{\phi}_1\in(0,1]$: \textbf{1)} given $\hat{\phi}_t$, set $x_{j_2}=h^{-1}(\hat{\phi}_t)$ on $(\frac{r}{r+k},1]$;
\textbf{2)} for $i\neq j_2$, pick $x_i\in\{y_{i-1},y_i\}$ minimizing $p_ix_i^r(1-x_i)^{k}((1-x_i)^m-\hat{\phi}_t)$;
\textbf{3)} set $\hat{\phi}_{t+1}=\frac{\sum_{i=1}^{n}x_i^r(1-x_i)^{m+k} p_i}{\sum_{i=1}^{n}x_i^r(1-x_i)^{k} p_i}$\,. 
``Strict monotonicity'', ``root existence'', and continuity arguments, suggest that fixed-point iteration converges to conservative \acrshort*{cbi} solutions for fairly general \acrshort*{cbi} problems; proving this and characterizing the convergence speed will be useful (see \cite{Dinkelbach1967}, \cite{Schaible1976Dinkelbach}).


Problem\ \eqref{eqn_genCBIprob} provides conservative answers to ``single-objective'' questions; e.g. ``How large can the probability of future failures be, given the evidence?''. However, ``reliability'' is just one aspect of the wider notion of system ``dependability''~---~in practice, assessments of software ``security'', ``maintainability'', ``performability'', and ``timeliness'', might also require conservative answers  (see  \cite{AvizLapRand_2004} for taxonomy). Even when restricting solely to questions of ``reliability'', there are multiple questions that could be asked simultaneously and multiple ways of quantitatively answering these questions collectively via inference. Future work will look to broadening problem\ \eqref{eqn_genCBIprob} to cater for ``multi-objective'' conservative assessments.

Finally, demonstrating software meets ultra-high reliability requirements is often infeasible (see  \cite{ButlerFinelli_1993}). While Bayesian inference can alleviate this (see  \cite{LittlewoodStrigini_1993,LittlewoodStrigini_2011}), conservative inference under \acrshort*{cbi} will require partial prior specifications that strongly influence the inference. Identifying situations where such specifications can be justified is useful~---~see  \cite{zhao_assessing_2019,zhao_assessing_2020},  \cite{BishopStrigini_2022}, for examples.

\bibliography{bibliography}

\newpage
\appendix

\phantomsection\label{supplementary-material}
\bigskip

\begin{center}

{\large\bf SUPPLEMENTARY MATERIAL}

\end{center}

\begin{description}
\item[Title:]
Proofs of mathematical results in the paper ``Conservative Software Reliability Assessments Using Collections of Bayesian Inference Problems''.
\item[Appendix A:]
Proof of Proposition~\ref{prop_cbi_transform};
\item[Appendix B:]
Proof of Theorem~\ref{thm_gen_sol};
\item[Appendix C:]
Proof that the infimum $\phi^*$ of Theorem~\ref{thm_gen_sol} is monotonically increasing and bounded above; 
\item[Appendix D:]
Proof that the ratio of the number of demands needed to demonstrate a reliability bound~---~using a Beta prior vs using a \acrshort*{cbi}-based prior~---~is bounded; 
\item[Appendix E:]
Proof of the asymptotic behavior of the fixed-point that defines the solution of Theorem~\ref{thm_gen_sol};
\item[Appendix F:]
Proof of the functional form of the solution in Corollary~\ref{cor_special_case_no_failure_solution} in the limit of justifying a non-zero probability of the software being fault-free. 
\end{description}
\newpage
\section{Proof of Proposition\ \ref{prop_cbi_transform}}
\label{sec_app_prop1}
The \acrshort*{lhs} and \acrshort*{rhs} are well-defined, since they have non-zero denominators. So, suppose the claim of the proposition is not true. Then,
     $\dfrac{\mathbb E[f(X)]}{\mathbb E[g(X)]}\,\neq\,\frac{\sum\limits_{i=1}^{n}f(x_i)\mathbb P(A_i)}{\sum\limits_{i=1}^{n}g(x_i)\mathbb P(A_i)}$ for all $x_i\in I_i$. The \textit{r.h.s.} of the ``$\neq$'' is a non-negative, continuous function of $x_1\in I_1$, $\ldots$, $x_n\in I_n$. So, by the \acrfull*{ivt}, either 
    \begin{align*}
        \frac{\mathbb E[f(X)]}{\mathbb E[g(X)]}\,>\,\frac{\sum_{i=1}^{n}f(x_i)\mathbb P(A_i)}{\sum_{i=1}^{n}g(x_i)\mathbb P(A_i)}
    \qquad\text{ or}\qquad\frac{\mathbb E[f(X)]}{\mathbb E[g(X)]}\,<\,\frac{\sum_{i=1}^{n}f(x_i)\mathbb P(A_i)}{\sum_{i=1}^{n}g(x_i)\mathbb P(A_i)}
    \end{align*}exclusively, for all $x_1\in I_1$, $\ldots$, $x_n\in I_n$. W.l.o.g., suppose the ``$>$'' relationship holds. Then, $\mathbb E[f(X)]\sum\limits_{i=1}^{n}g(X_i)\mathbb P(A_i)\,>\,\mathbb E[g(X)]\sum\limits_{i=1}^{n}f(X_i)\mathbb P(A_i)$ 
    for random variables $X_i$ such that $X_i:A_i\rightarrow \mathbb R$ and $X_i(\omega):=X(\omega)$ for $\omega\in A_i$ (see Figure \ref{fig_cbi_transform}). 
\begin{figure}[htbp!]
\begin{center}
\begin{tikzpicture}[thick,scale=0.7, every node/.style={scale=0.7}]
    \begin{scope}[scale=0.5,rotate=125]
    
         \draw[clip, preaction={fill = lightgray, draw=white}] plot [smooth cycle, tension=1] coordinates {(0,5) (1,4.6) (2,4) (3,3.8) (5,0) (3.6,-4) (0,-5) (-1,-3.5) (-3,-4) (-4,0) (-3,2) (-1.5,4.3)};

        \begin{scope}[rotate=45,scale=1]
        \draw[fill = lightgray, thick, draw=white, line width = 0.07cm] plot [smooth cycle, tension=1] coordinates {(4,-3) (3,2) (4,4) (-1,4.5) (1,0.5) (-3,-1) (-8,0.5)};
        \draw[fill = lightgray, thick, draw=white, line width = 0.07cm] plot [smooth cycle, tension=1] coordinates {(-3,-8) (0,-1) (2,0) (1,3) (0,8) (-6,10)};
        \draw[fill = lightgray, thick, draw=white, line width = 0.07cm] plot [smooth cycle, tension=1] coordinates {(0,8) (-1,1) (-4,2) (-6, 1) (-6,10)};
        \draw[fill = lightgray, thick, draw=white, line width = 0.07cm] plot [smooth cycle, tension=1] coordinates {(-3,7) (3,1) (3.5,3) (2,4) (6,8)};
        \end{scope}
        
        \begin{scope}[scale=0.4,shift={(4.2,7)},rotate around={90:(4.2,7)}] 
        \draw[fill = lightgray, thick, draw=white, line width = 0.07cm] plot [smooth cycle, tension=1] coordinates {(0,8) (-1,1) (-4,2) (-6, 1) (-6,10)};
        \draw[fill = lightgray, thick, draw=white, line width = 0.07cm] plot [smooth cycle, tension=1] coordinates {(-3,7) (3,1) (3.5,3) (2,4) (6,8)};    
        \end{scope}
        
        \begin{scope}[scale=0.5,shift={(3,0)},rotate around={90:(3,0)}]        
        \draw[fill = lightgray, thick, draw=white, line width = 0.07cm] plot [smooth cycle, tension=1] coordinates {(0,8) (-1,1) (-4,2) (-6, 1) (-6,10)};
        \draw[fill = lightgray, thick, draw=white, line width = 0.07cm] plot [smooth cycle, tension=1] coordinates {(-3,-8) (0,-1) (2,0) (1,3) (0,8) (-6,10)};
        \draw[fill = lightgray, thick, draw=white, line width = 0.07cm] plot [smooth cycle, tension=1] coordinates {(-3,7) (3,1) (3.5,3) (2,4) (6,8)};    
        \end{scope}

        \begin{scope}[scale=0.3,shift={(-8,4)},rotate around={45:(-8,4)}]
        \draw[fill = lightgray, thick, draw=white, line width = 0.07cm] plot [smooth cycle, tension=1] coordinates {(0,8) (-1,1) (-4,2) (-6, 1) (-6,10)};
        \draw[fill = lightgray, thick, draw=white, line width = 0.07cm] plot [smooth cycle, tension=1] coordinates {(-3,7) (3,1) (3.5,3) (2,4) (6,8)};    
        \end{scope}

        \begin{scope}[scale=0.5,shift={(-5,-4)},rotate around={25:(-5,-4)}]
        \draw[fill = lightgray, thick, draw=white, line width = 0.07cm] plot [smooth cycle, tension=1] coordinates {(0,8) (-1,1) (-4,2) (-6, 1) (-6,10)};
        \draw[fill = lightgray, thick, draw=white, line width = 0.07cm] plot [smooth cycle, tension=1] coordinates {(-3,7) (3,1) (3.5,3) (2,4) (6,8)};    
        \end{scope}
    \end{scope}

        \begin{scope}
            \node[scale=2.5] (A) at (-1.5,3) {$\pmb{\Omega}$};
            \node[align=center] (B1) at (11,3.1) {$\pmb{X_2(\omega)=X(\omega)}$\\ for $\pmb{\omega\in A_2}$};
            \node[align=center] (B2) at (8.3,1.2) {$\pmb{X_1(\omega)=X(\omega)}$\\ for $\pmb{\omega\in A_1}$};
            \node[align=center] (B3) at (5.85,-0.9) {$\pmb{X_i(\omega)=X(\omega)}$\\ for $\pmb{\omega\in A_i}$};

        
            \node[scale=1.3] (C) at (0.55,1.8) {$\pmb{A_2}$}; 
            \draw[line width=0.04cm,-{Latex[length=5mm]}] plot [smooth, tension=0.6] coordinates { (0.9, 1.5) (3,1.7) (9,3) (12.04,0.35) };

            \node[scale=1.3] (D) at (0.47,0.5) {$\pmb{A_1}$}; 
            \draw[line width=0.04cm,-{Latex[length=5mm]}] plot [smooth, tension=0.75] coordinates { (0.9, 0.5) (3,0.9) (9,2) (10.44,0.05) };            

            \node[scale=1.3] (E) at (1.7,-0.55) {$\pmb{A_i}$}; 
            \draw[line width=0.04cm,-{Latex[length=5mm]}] plot [smooth, tension=0.55] coordinates { (2.1,-0.55) (6.5,-1.9) (8,-1.7) (7,0.2) (11.5,-2.4) (13.7,-0.6) };                        
        \end{scope}

        \begin{scope}
        \draw[line width=0.06cm, -{Latex[length=5mm]}] (9,-0.25) -- (15.5,-0.25);

        \node[scale=1.1] (A) at (10.5,-0.8) {$\pmb{X(A_1)}$};
        \node[scale=1.1] (B) at (12.05,-0.8) {$\pmb{X(A_2)}$};
        \node[scale=2] (D) at (15.5,-0.8) {$\mathbb R$};
        \node[scale=1.1] (E) at (13.71,0.2) {$\pmb{X(A_i)}$};
        \node[scale=1.1] (E) at (13.73,-0.5) {$\pmb{\underbrace{}{}}$};
        

        \node[scale=2.4] (IU) at (12.1,0.15) {$\pmb{\overbrace{}{}}$};
        \node[scale=1] (IL) at (10.47,0) {$\pmb{\overbrace{}{}}$};
        
        \end{scope}        
\end{tikzpicture}
\end{center}\caption{\textit{Illustrating the $X_i$ mappings. Note, by definition of the $A_i$, $X(A_i)=I_i$.}} \label{fig_cbi_transform}
\end{figure}
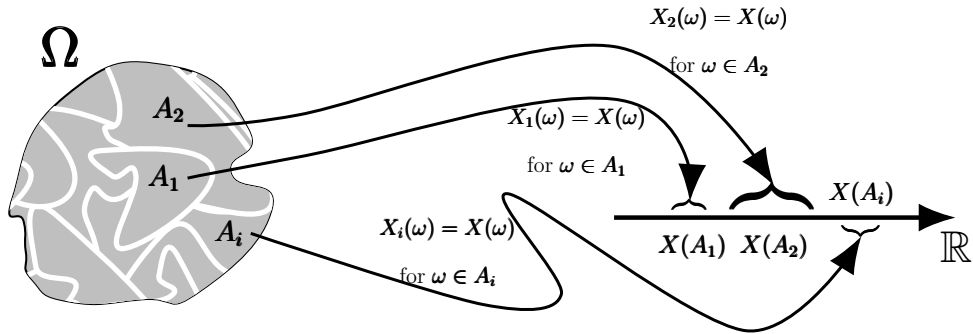

 Consider probability measures\footnote{These $\mu_i$s are well-defined: $\Sigma\cap A_i$ is a sigma-algebra of subsets of $A_i$; a consequence of $A_i$ belonging to $\Sigma$, and $\Sigma$ being a sigma-algebra. Also note, $\mathbbm{1}_{\mathtt S}$ is an indicator function that equals $1$ if logical predicate $\mathtt S$ is true, and equals $0$ otherwise.} $\mu_i:\Sigma\cap A_i\rightarrow [0,1]$, $\mu_i(A):=\int\limits_{\Omega}\mathbbm{1}_{\omega\in A}\mathtt{d}\mathbb P(\omega)/\mathbb P(A_i)$ for any $A\in \Sigma\cap A_i$, and the probability space $(\times_{i=1}^{n}A_i, \otimes_{i=1}^{n}\Sigma \cap A_i,\otimes_{i=1}^{n}\mu_i)$. The ``$>$'' inequality involving the $X_i$s is between $\otimes_{i=1}^{n}\Sigma \cap A_i/B(\mathbb R)$-measurable functions. Hence, taking expectations (w.r.t. the product probability measure) of both sides of the inequality, 
and by the linearity of expectations, 
\begin{align}
\label{eq_cbi_transform_ineq}
     &\mathbb E[f(X)]  \sum\limits_{i=1}^{n}\ \int\limits_{\times_{j=1}^{n}A_j}\!\!\!\!g(X_i)\,\mathtt{d}\!\otimes_{j=1}^{n}\mu_j\,\mathbb P(A_i) \hspace{0.1cm}>\mathbb E[g(X)]\sum\limits_{i=1}^{n}\int\limits_{\times_{j=1}^{n}A_j}\!\!\!\! f(X_i)\,\mathtt{d}\!\otimes_{j=1}^{n}\mu_j\,\mathbb P(A_i)\,.
     \end{align} 

We make the following three observations:
   \begin{enumerate}
       \item 
     $\int\limits_{\times_{j=1}^{n}A_j}\!\!\!\! g(X_i)\mathtt{d}\!\otimes_{j=1}^{n}\mu_j\,=\,\int\limits_{A_1} \mathtt{d}\mu_1 \cdots \int\limits_{A_i}g(X_i) \mathtt{d}\mu_i\cdots\int\limits_{A_n} \mathtt{d}\mu_n\,,\ \text{by Tonelli's Theorem}$;
\item 
 $\int\limits_{A_i}g(X_i) \mathtt{d}\mu_i =\int\limits_{A_i}g(X)|_{A_i} \mathtt{d}\mu_i=\frac{\int\limits_{\Omega}g(X)\mathbbm{1}_{A_i} \mathtt{d}\mathbb P}{\mathbb P(A_i)}=\frac{\mathbb E[g(X)\mathbbm{1}_{A_i}]}{\mathbb P(A_i)}$, by $X_i$ and $\mu_i$ definitions; 
 \item $\int\limits_{A_i} \mathtt{d}\mu_i=1, \,\text{by definition of } \mu_i$\,.
   \end{enumerate}
   
Similar observations apply to $f(X_i)$. Using these, inequality \eqref{eq_cbi_transform_ineq} becomes $$\mathbb E[f(X)] \sum_{i=1}^{n}\frac{\mathbb E[g(X)\mathbbm{1}_{A_i}]}{\mathbb P(A_i)}\mathbb P(A_i)\,>\,\mathbb E[g(X)] \sum_{i=1}^{n}\frac{\mathbb E[f(X)\mathbbm{1}_{A_i}]}{\mathbb P(A_i)}\mathbb P(A_i)\,.$$Hence, we arrive at the contradiction, $\mathbb E[f(X)]\mathbb E[g(X)]>\mathbb E[f(X)]\mathbb E[g(X)]$.

\newpage
\section{Proof of Theorem~\ref{thm_gen_sol}}
\label{app_proofofCBIproblem}
A three-step proof of Theorem~\ref{thm_gen_sol}: 
\begin{enumerate}
   \item[]\textbf{step I:} the gradient of the objective function determines the functional forms the objective function can assume when minimized. Prove that two of the gradient's components each have a non-trivial root in $[0,1]$ when these components equal zero, then use these roots and the signs of the gradient's components to deduce finitely many functional forms for the objective function when minimized.
  \item[]\textbf{step II:} prove these functional forms have finitely many infima, thus the objective function has finitely many infima: the smallest infimum is the global infimum.
  \item[]\textbf{step III:} based on the previous steps, state the infimum, state \eqref{eqn_morenoetalBer}'s solution as a fixed-point system, and state the prior that gives the infimum.
\end{enumerate}

\noindent\textbf{Step I:}
There are a finite number of plausible functional forms that $\phi$ can take when minimized. To deduce these, first, we prove the existence of a pair of points at which two components of the gradient of $\phi$ are zero. Then, using this pair to deduce the signs of gradient components, we deduce the functional forms.

Define $f,g:[0,1] \longrightarrow [0,1]$, $f(x)=(1-x)^{m}g(x)$ and $g(x)=x^r(1-x)^{k}$, with $k>0,\,r > 0$. Consider $x_i\in I_i$ for $i=1\ldots n$. The  gradient of $\phi$ has $i$-th component
\begin{equation}
     \frac{\partial \phi}{\partial x_{i}}=\frac{g(x_i)(r-x_i(r+k))p_i}{x_i(1-x_i){\sum_{j=1}^{n}g(x_j)p_{j}}}\left(h(x_i)-\phi\right)\,,
     \label{eq_grad}
 \end{equation} where $h$ is the function $h:[0,1]\setminus\{\frac{r}{r+k}\}\longrightarrow \mathbb R$, $h(x)=(1-x)^m\left(\frac{r-x(m+k+r)}{r-x(k+r)}\right)$. From \eqref{eq_grad}, the $i$-th gradient component is non-trivially zero if, and only if, 
\begin{equation}    
h(x_i)=\phi
\label{eq_stationarity_cond}
\end{equation}
Otherwise, the gradient component is positive or negative, depending on combinations of whether $h(x_i)>\phi$ or $h(x_i)<\phi$, and $x_i<\frac{r}{r+k}$ or $x_i>\frac{r}{r+k}$.

The signs of the gradient components determine a preferred location in each compact interval $\bar{I}_i$ ($\bar{I}_i$ is the closure of $I_i$) where probability mass should be assigned to minimize $\phi$. This restricts the possible functional forms of $\phi$ that equal the infimum, $\phi^*$, of $\phi$. 
Probability $p_i$ could be placed at $y_{i-1}$ if $\partial\phi/\partial x_i>0$ in a neighborhood of $y_{i-1}$, or it could be placed at $y_i$ if $\partial\phi/\partial x_i<0$ in a neighborhood of $y_{i}$. It could also be placed at $x$, if $x$ is a stationary point (i.e. $\partial\phi/\partial x_i=0$ at $x_i=x$) at which $\phi$ has a local infimum. 

Whenever the $p_i$s are assigned within each $\bar{I}_i$, there are only two stationary points in $[0,1]$ that satisfy $\partial \phi/\partial x_i=0$~---~one point which is a local infimum, and the other a local supremum. This follows from the graph of $h$, which we deduce now.

Since $h$ is a rational function, it is continuously differentiable over its domain. The derivative of $h$ shows $h$ is monotonically decreasing over $[0,\frac{r}{r+m+k})$ and $(\frac{r}{r+k},1]$, and any stationary points of $h$ must lie in $(\frac{r}{r+m+k},\frac{r}{r+k})$. Differentiating $h$ for $x\neq \frac{r}{r+k}$, \begin{equation}
    \dfrac{\mathtt{d} h}{\mathtt{d}x}=(1-x)^m\left(\dfrac{r-x(r+m+k)}{r-x(r+k)}\right)'+\left(\dfrac{r-x(r+m+k)}{r-x(r+k)}\right)\left((1-x)^m\right)'
    \label{eq_grad_h}
\end{equation} 
    Observe, $\left((1-x)^m\right)'=-m(1-x)^{m-1}<0$ and $\left(\frac{r-x(r+m+k)}{r-x(r+k)}\right)'=\frac{-mr}{(r(1-x)-kx)^2}<0$. Thus, $\mathtt{d} h/\mathtt{d}x<0$ for $x<\frac{r}{r+m+k}$, because $r-x(r+m+k)>0$ and ${r-x(r+k)}>0$ whenever $x<\frac{r}{r+m+k}$. Similarly, $\mathtt{d} h/\mathtt{d}x<0$ for $x>\frac{r}{r+k}$, because $r-x(r+m+k)<0$ and ${r-x(r+k)}<0$ whenever $x>\frac{r}{r+k}$. So, $h$ is monotonically decreasing over these intervals, as claimed. However, for $\frac{r}{r+m+k}< x< \frac{r}{r+k}$ (i.e. $r-x(r+m+k)<0$ and ${r-x(r+k)}>0$), the sign of $\mathtt{d} h/\mathtt{d}x$ depends on $r,\ k$, and $m$.
    
    Interestingly, $h(x)<0$ over the interval  $\frac{r}{r+m+k}< x< \frac{r}{r+k}$. Furthermore, $h$ has, at most, a pair of stationary points in this interval; because, from \eqref{eq_grad_h}, $\mathtt{d} h/\mathtt{d}x=0$ non-trivially if, and only if,\begin{equation*}x=\dfrac{2r^2+(2k+m+1)r\pm\sqrt{-4rk^2-4kr(m+r)+r^2(m-1)^2}}{2(r+k)(r+m+k)}\end{equation*}
These stationary points are real \emph{iff} $(-4rk^2-4kr(m+r)+r^2(m-1)^2)\geqslant0$. Otherwise, there are no stationary points: the points are a complex conjugate pair. 

When real, this pair of points lie in $(\frac{r}{r+m+k},\frac{r}{r+k})$. The smaller of the two points is greater than $\frac{r}{r+m+k}$, because 
\begin{align*}
    &\dfrac{2r^2+(2k+m+1)r-\sqrt{-4rk^2-4kr(m+r)+r^2(m-1)^2}}{2(r+k)(r+m+k)}>\frac{r}{r+m+k}\\
    &\iff r(m+1)-\sqrt{-4rk^2-4kr(m+r)+r^2(m-1)^2}>0
\end{align*} The last inequality is true because $r(m+1)>r(m-1)$. 
Analogously, the larger of the two points is less than $\frac{r}{r+k}$, because 
\begin{align*}
    &\dfrac{2r^2+(2k+m+1)r+\sqrt{-4rk^2-4kr(m+r)+r^2(m-1)^2}}{2(r+k)(r+m+k)}<\frac{r}{r+k}\\
    &\iff \dfrac{-r(m-1)+\sqrt{-4rk^2-4kr(m+r)+r^2(m-1)^2}}{2(r+k)(r+m+k)}<0\\
    &\iff -r(m-1)+\sqrt{-4rk^2-4kr(m+r)+r^2(m-1)^2}<0
\end{align*} The last inequality is true because the radicand subtracts the positive number $4rk^2+4kr(m+r)$ from $r^2(m-1)^2$, so the square-root of the radicand must be smaller than $r(m-1)$. We conclude, both stationary points of $h$ lie in $(\frac{r}{r+m+k},\frac{r}{r+k})$.

 So far, we have shown $h$ is monotonically decreasing  on $[0,\frac{r}{r+m+k})$ and $(\frac{r}{r+k},1]$. Notice, from the definition of $h$, $h(x)\geqslant 0$ over these intervals. We have also shown $h$ has a pair of stationary points (if any) in $(\frac{r}{r+m+k},\frac{r}{r+k})$. And, over this interval, $h<0$ by definition. Altogether, this means \eqref{eq_stationarity_cond} has a root~---~$\phi$ can equal $h$~---~only where $h$ is positive and monotonically decreasing; that is, only over $[0,1]\setminus(\frac{r}{r+m+k},\frac{r}{r+k})$. In fact, there are two reasons why $0<\phi<1$, so that $\phi$ is bounded by $h$:
 \textbf{i)} the terms in $\phi$'s numerator are strictly less than the corresponding terms in its denominator; \textbf{ii)} both numerator and denominator are strictly positive. 
 
 Figure~\ref{fig_function_h} illustrates the graph of $h$. There is an $x^*\in(\frac{r}{r+k},1]$ such that $h(x^*)=1$. So, $h$ is $0$ at $1$ and $\frac{r}{r+m+k}$, while $h$ is $1$ at $0$ and $x^*$. 
 Moreover, for an arbitrary value of $\phi$, denoted $\hat{\phi}$, this value is represented as a horizontal line at height $\hat{\phi}$. And, for any $x_u\in[x^*,1]$ such that $h(x_u)=\hat{\phi}$, a corresponding $x_l\in[0,\frac{r}{r+m+k}]$ also satisfies $h(x_l)=\hat{\phi}$. Since $h$ bounds $\phi$, and the horizontal line of height $\hat{\phi}$ intersects $h$ at exactly two points, these imply the functional form of $\phi$ when minimized must be based on exactly two components of the gradient of $\phi$ vanishing. The local supremum acts like a ``source'' that ``repels'' probability masses in nearby intervals, while the local infimum acts like a ``sink'' that ``attracts'' probability masses. This proves that at most two components of the gradient have non-trivial roots.


\def\rval{1}
\def\mval{3}
\def\kval{2}

\def\xsing{\fpeval{\rval / (\rval + \kval)}}

\def\xroot{\fpeval{\rval / (\rval + \mval + \kval)}}

\begin{figure}[htbp!]
\begin{center}
\scalebox{0.85}
{\begin{tikzpicture}
  \begin{axis}[
    width=13cm,
    height=8cm,
    domain=0:1,
    samples=400,
    xlabel={$x$},
    ylabel={$h(x)$},
    axis lines=middle,
    xtick distance=1,
    ytick distance=1,
    enlargelimits=true,
    tick style={black},
    ymin=-1, ymax=1,
  ]

  \draw[line width=1mm,gray] ({axis cs:0,0.7}) -- ({axis cs:1,0.7});
  \node[black] at (axis cs:-0.01,0.7) [anchor=east] {$\pmb{\hat{\phi}}$};

  \addplot[
    domain=0:\fpeval{\xsing - 0.001},
    line width = 0.5mm,
    black,
  ]
  {(1 - x)^\mval * (\rval - x*(\mval + \kval + \rval)) / ( \rval - x*(\kval + \rval) )};

  \addplot[
    domain=\fpeval{\xsing + 0.001}:0.999,
    line width = 0.5mm,
    black,
  ]
  {(1 - x)^\mval * (\rval - x*(\mval + \kval + \rval)) / ( \rval - x*(\kval + \rval) )};


  \draw[dashed,thick,black] ({axis cs:\xsing,-0.9}) -- ({axis cs:\xsing,1.1});
  \node[black] at (axis cs:\xsing+0.003,-1.06) {$\pmb{\frac{r}{r+k}}$};

  \draw[dashed,thick,black] ({axis cs:\xsing+0.095,-0.3}) -- ({axis cs:\xsing+0.1,1});
  \node[black] at (axis cs:\xsing+0.105,-0.4)  {$\pmb{x^\ast}$};

  \draw[dashed,thick,black] ({axis cs:\xroot,-0.5}) -- ({axis cs:\xroot,0});
  \node[black] at (axis cs:\xroot,-0.65) {$\pmb{\frac{r}{r+m+k}}$};

  \draw[dashed,thick,black] ({axis cs:0.05,0.7}) -- ({axis cs:0.05,0});
  \node[black] at (axis cs:0.06,-0.1) {$\pmb{x_l}$};

  \draw[dashed,thick,black] ({axis cs:0.465,0.7}) -- ({axis cs:0.465,0});
  \node[black] at (axis cs:0.479,-0.1) {$\pmb{x_u}$};

  \end{axis}
\end{tikzpicture}}
\end{center}\caption{\textit{Illustration of the graph of $h(x) = (1 - x)^m\left(\frac{r - x(m + k + r)}{r - x(k + r)}\right)$.}} \label{fig_function_h}
\end{figure}
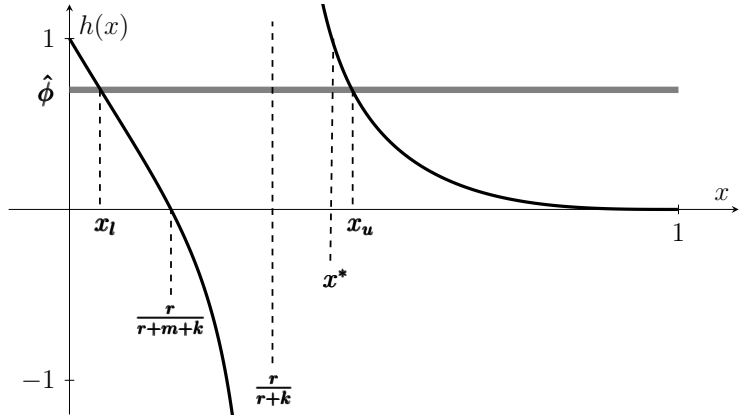

 Consequently, to minimise $\phi$, all of the $x_i$ must be assigned in their respective $I_i$ as dictated by a ``source'' and ``sink'' pair. All $x_i$ to the left of the ``source'' must be located at the lower endpoints of their respective $I_i$ intervals, while $x_i$ between the ``source'' and  ``sink'' must be located at the upper endpoints of their $I_i$. The $x_i$ for the interval containing the ``source'' must be located at one of the endpoints of the interval~---~whichever endpoint gives the lower objective function value. Analogous reasoning in terms of the ``sink'' requires all $x_i$ to the right of the ``sink'' to be located at the lower endpoints of their respective $I_i$. The $x_i$ in the same interval as the ``sink'' should be located \emph{at} the ``sink''.
 


In interval $I_{1}$, $x_1$ must be $0$ when $n\geqslant 2$. Indeed, express $\phi$ as a function of $\alpha$, \begin{equation*}
    \phi(\alpha)=(1-x_1)^{m}\dfrac{g(x_1)p_1}{\sum_{i=1}^ng(x_i)p_i}+\sum_{j=2}^{n}(1-x_j)^{m}\dfrac{g(x_j)p_j}{\sum_{i=1}^n g(x_i)p_i}=(1-x_1)^{m}\alpha+(1-\alpha)\beta
\end{equation*} where  $\alpha=g(x_1)p_1/\sum_{i=1}^n g(x_i)p_i$ and $\beta=\sum_{j=2}^{n}(1-x_j)^{m}g(x_j)p_j/\sum_{i=2}^n g(x_i)p_i$. Since $(1-x_1)^m\geqslant(1-x_2)^m\geqslant\dots\geqslant(1-x_n)^m$ then, for all $\alpha\in[0,1]$, we have $\phi(\alpha)\geqslant \beta$. Thus, $\phi(\alpha)\geqslant \phi(0)$, and $\phi$ is smallest necessarily when $x_1=0$ in $\bar{I}_1$. Otherwise, when $n=1$ instead, $\phi$ is $(1-x_1)^m$, so $x_1=1$ minimizes $\phi$. 

All of the foregoing justifies there being only two possible functional forms, $\tilde{\phi}_1$ and $\tilde{\phi}_2$, for $\phi$ at its infima. Define $x^*_1$ to be $\max\{x^*,y_1\}$ and choose $x\in[x^*_1,1]$. Denote the interval containing $x$ as the ${j_2}^{th}$ interval, and the ${j_1}^{th}$ interval as the interval containing the corresponding $x_l$ under $h$. Then, $\tilde{\phi_1}:[x^*_1,1]\longrightarrow[0,1]$ and $\tilde{\phi_2}:[x^*_1,1]\longrightarrow[0,1]$ are the functional forms,  \begin{align*}  &\tilde{\phi_1}(x):=
    \begin{cases}
  \dfrac{\sum\limits_{i=2}^{j_{1}}f(y_{i-1})p_i+\sum\limits_{i=j_{1}+1}^{j_{2}-1}f(y_{i})p_i+f(x)p_{j_2}+\sum\limits_{i=j_{2}+1}^{n}f(y_{i-1})p_i}{\sum\limits_{i=2}^{j_{1}}g(y_{i-1})p_i+\sum\limits_{i=j_{1}+1}^{j_{2}-1}g(y_{i})p_i+g(x)p_{j_2}+\sum\limits_{i=j_{2}+1}^{n}g(y_{i-1})p_i}\,,\hfill j_1<j_2\\
     \dfrac{\sum\limits_{i=2}^{n}f(y_{i-1})p_i}{\sum\limits_{i=2}^{n}g(y_{i-1})p_i}\,,\hfill j_1=j_2
\end{cases}
\\
  &\tilde{\phi_2}(x):=\begin{cases}
   \dfrac{\sum\limits_{i=2}^{j_1-1}\!\!f(y_{i-1})p_i+\sum\limits_{i=j_{1}}^{j_{2}-1}\!f(y_{i})p_i+f(x)p_{j_2}+\sum\limits_{i=j_2+1}^{n}\!\!f(y_{i-1})p_i}{\sum\limits_{i=2}^{j_1-1}\!\!g(y_{i-1})p_i+\sum\limits_{i=j_{1}}^{j_{2}-1}\!g(y_{i})p_i+g(x)p_{j_2}+\sum\limits_{i=j_2+1}^{n}\!\!g(y_{i-1})p_i}\,,\hfill  j_1<j_2\\
    \dfrac{\sum\limits_{i=2}^{j_{2}-1}f(y_{i-1})p_i+f(x)p_{j_2}+\sum\limits_{i=j_{2}+1}^{n}f(y_{i-1})p_i}{\sum\limits_{i=2}^{j_{2}-1}g(y_{i-1})p_i+g(x)p_{j_2}+\sum\limits_{i=j_{2}+1}^{n}g(y_{i-1})p_i}\,,\hfill  j_1=j_2 \end{cases}
\end{align*} 
Notice how $j_1$ and $j_2$ can change as $x$ changes, since the domain $[x_1^*,1]$ of $x$ values can intersect multiple $I_i$ intervals. Also note that, since $x_1=0$, we have $f(x_1)=f(0)=0$ and $g(x_1)=g(0)=0$, so these terms do not appear in the functional forms.

\noindent\textbf{Step II:}
$\phi$ has a finite number of local infima. This is equivalent to only a finite number of possible placements of the $p_i$ probabilities in the $\bar{I}_i$ intervals, in such a way that the infimum $\phi^*$ is either $\tilde{\phi}_1$ or $\tilde{\phi}_2$ while, simultaneously, $\phi^*$ satisfies $h(x_i)=\phi^*$ for some $x_i\in\bar{I}_i$. To show this, consider all possible placements of the $p_i$ consistent with the following three cases:

\noindent Let the interval containing $x^*$ be $I_{i^*}$.
\begin{enumerate}
    \item[] \textbf{case I:} Assume $x$ and $x_1^*$ lie in separate intervals. For each $i\leqslant i^*$ place $p_i$ at an endpoint of $I_i$. Otherwise, place $p_i$ as close as possible to $x$ for the other intervals above $I_{i^*}$; in particular, place $p_i$ at $x$ for the interval containing $x$ (see Figure~\ref {fig_mass_placement_case1}).

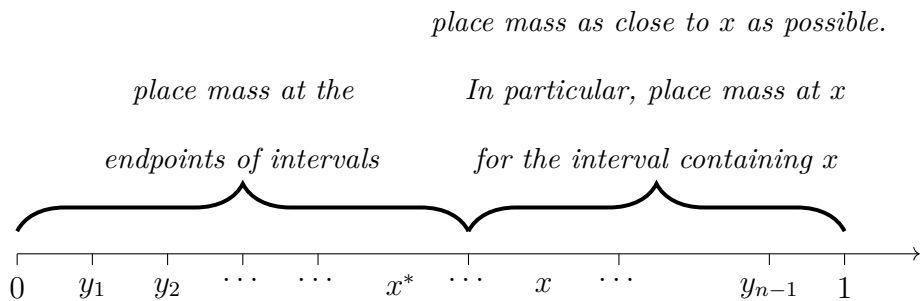
\begin{figure}[htbp!]\centering
\begin{tikzpicture}
  \draw[->] (0,0) -- (12,0);

 \draw (0,0) -- (0,-0.15) node[below=5pt] at (0,0) {$0$};
  \foreach \x/\label in {
    1/{$y_1$},
    2/{$y_2$},
    3/{\dots},
    4/{\dots},
    6/{\dots},
    8/{\dots},
    10/{$y_{n-1}$},
     11/{$1$}
  }{
    \draw (\x,0) -- (\x,-0.15);
    \node[below=5pt] at (\x,0) {\label};
  }

  \node[below=3.5pt] at (5.1,0) {$x^*$};

  \node[below=5pt] at (7,0) {$x$};

  \draw [decorate,decoration={brace,amplitude=18pt},yshift=0.3cm,line width=1.5pt]
    (0,0) -- (6,0) 
    node[midway,above=17pt,text width=4cm,align=center]
    {\small\textit{place mass at the endpoints of intervals}};
    
  \draw [decorate,decoration={brace,amplitude=18pt},yshift=0.3cm,line width=1.5pt]
    (6,0) -- (11,0) 
    node[midway,above=17pt,text width=6cm,align=center]
    {\small\textit{place mass as close to $x$ as possible. In particular, place mass at $x$ for the interval containing $x$}};
\end{tikzpicture}
    \caption{\textit{Illustration of  possible probability mass placement: case I}}\label{fig_mass_placement_case1}
\end{figure}
      \item[] \textbf{case II:} Assume $x$ and $x_1^*$ are in the same interval. For each $i< i^*$ place $p_i$ at an endpoint of $I_i$. Place $p_i$ as close as possible to $x$ in all other intervals; in particular, place $p_{i^*}$ at $x$ in interval $I_{i^*}$ containing $x$ and $x^*$ (see Figure~\ref {fig_mass_placement_case2}).
      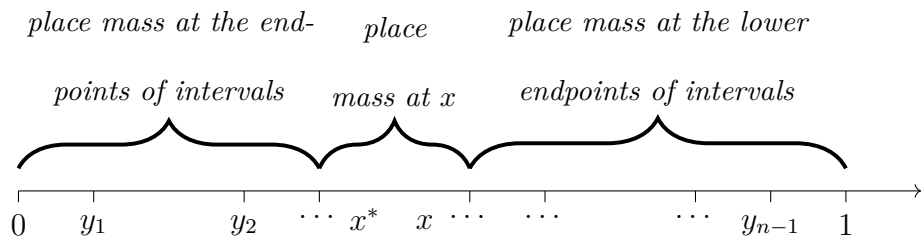
\begin{figure}[ht!]\centering
      \begin{tikzpicture}
  \draw[->] (0,0) -- (12,0);

   \draw (0,0) -- (0,-0.15) node[below=5pt] at (0,0) {$0$};
  \foreach \x/\label in {
    1/{$y_1$},
    3/{$y_2$},
    4/{\dots},
    6/{\dots},
    7/{\dots},
    9/{\dots},
    10/{$y_{n-1}$},
    11/{$1$}
  }{
    \draw (\x,0) -- (\x,-0.15);
    \node[below=5pt] at (\x,0) {\label};
  }

  \node[below=2.5pt] at (4.6,0) {$x^*$};
  
    \node[below=2.5pt] at (8,0) {\space};

  \node[below=5pt] at (5.4,0) {$x$};

  \draw [decorate,decoration={brace,amplitude=18pt},yshift=0.3cm,line width=1.5pt]
    (0,0) -- (4,0) 
    node[midway,above=19pt,text width=5.5cm,align=center]
    {\small\textit{place mass at the endpoints of intervals}};
  \draw [decorate,decoration={brace,amplitude=18pt},yshift=0.3cm,line width=1.5pt]
    (4,0) -- (6,0) 
    node[midway,above=19pt,text width=2cm,align=center]
    {\small\textit{place mass at $x$}};
   
  \draw [decorate,decoration={brace,amplitude=19pt},yshift=0.3cm,line width=1.5pt]
    (6,0) -- (11,0) 
    node[midway,above=19pt,text width=5.5cm,align=center]
    {\small\textit{place mass at the lower endpoints of intervals}};
\end{tikzpicture}
          \caption{\textit{Illustration of possible probability mass placement: case II}}
          \label{fig_mass_placement_case2}
      \end{figure}
      \item[] \textbf{case III:} Assume $x$ and $x_1^*$ are in the same interval. For each $i< i^*$ place mass at an endpoint of $I_i$. Place $p_{i^*}$ at the lower endpoint of the interval $I_{i^*}$ containing $x$ and $x^*$. Otherwise, place mass as close as possible to $x$ in the intervals above $I_{i^*}$ (see Figure~\ref {fig_mass_placement_case3}).
\begin{figure}[ht!]\centering 
   \begin{tikzpicture}
  \draw[->] (0,0) -- (12,0);

   \draw (0,0) -- (0,-0.15) node[below=5pt] at (0,0) {$0$};
  \foreach \x/\label in {
    1/{$y_1$},
    3/{$y_2$},
    4/{\dots},
    6/{\dots},
    7/{\dots},
    8/{\dots},
    10/{$y_{n-1}$},
    11/{$1$}
  }{
    \draw (\x,0) -- (\x,-0.15);
    \node[below=5pt] at (\x,0) {\label};
  }

  \node[below=2.5pt] at (4.6,0) {$x^*$};

  \node[below=5pt] at (5.4,0) {$x$};

  \draw [decorate,decoration={brace,amplitude=18pt},yshift=0.3cm,line width=1.5pt]
    (0,0) -- (4,0) 
    node[midway,above=17pt,text width=4.6cm,align=center]
    {\small\textit{place mass at the endpoints of intervals}};
   
  \draw [decorate,decoration={brace,amplitude=18pt},yshift=0.3cm,line width=1.5pt]
    (4,0) -- (11,0) 
    node[midway,above=17pt,text width=7cm,align=center]
    {\small\textit{place mass at the lower endpoints of intervals}};
\end{tikzpicture}

	\caption{\textit{Illustration of possible probability mass placement: case III} 
    }
	\label{fig_mass_placement_case3}
\end{figure}
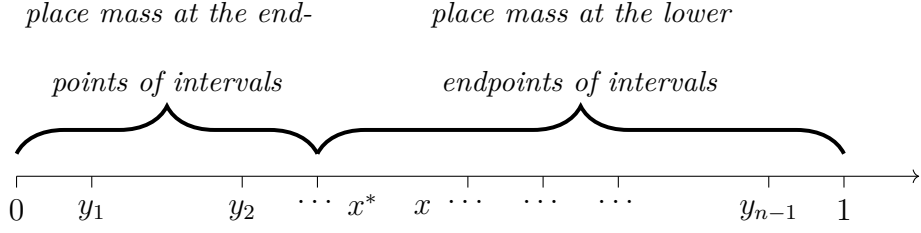
\end{enumerate}
    Placements of the $p_i$s that are consistent with these three cases define functional forms. For example, from case (I), define $\phi_{z_i,\ldots,z_{i^*},j_{2}}: \bar{I}_{j_2}\cap[x^*_1,1]\longrightarrow [0,1]$ as 
    \begin{align*}
    \phi_{z_i,\ldots,z_{i^*},j_2}(x)=\dfrac{\sum\limits_{i=1}^{i^*}f(z_i)p_i+\sum\limits_{i=i^*+1}^{j_{2}-1}f(y_i)p_i+f(x_i)p_{j_{2}}+\sum\limits_{i=j_{2}+1}^{n}f(y_{i-1})p_i}{\sum\limits_{i=1}^{i^*}g(z_i)p_i+\sum\limits_{i=i^*+1}^{j_{2}-1}g(y_i)p_i+g(x_i)p_{j_{2}}+\sum\limits_{i=j_{2}+1}^{n}g(y_{i-1})p_i}
    \end{align*}
where $j_2$ is an arbitrary choice from $(i^*+1),\ldots,n$, while $z_1,\ldots,z_{i^*}$ are an arbitrary placement of probability masses at an endpoint in each of the intervals $I_1,\ldots, I_{i^*}$. Analogous univariate rational functions can be defined based on cases (II) and (III). Altogether, there are a finite number of such rational functions. Note, $\tilde{\phi}_1$ and  $\tilde{\phi}_2$ must agree with some of these functional forms. So, the infimum of $\phi$ must also agree with some of these functional forms. 

Consider all such functional forms that agree on the locations in the intervals to the left of $x_1^*$; for example, the functions $\phi_{z_i,\ldots,z_{i^*},j_2}$ for all $j_2$. These define a piecewise continuous function $F:[x^*_1,1]\rightarrow [0,1]$, $F(x)=\phi_{z_i,\ldots,z_{i^*},j_2}(x)-h(x)$ where $x\in \bar{I}_{j_2}$. Since $F(x_1^*)$ is non-positive (because $h(x_1^*)\geqslant(1-x_1^*)^m\geqslant\phi_{z_i,\ldots,z_{i^*},i^*}(x_1^*)$ for $x\geqslant x_1^*$) and $F(1)$ is non-negative (because $h(1)=0$ and the $\phi_{z_i,\ldots,z_{i^*},j_2}$ are positive), the \acrshort*{ivt} implies that $F(x)=0$ for some $j_2$ interval containing this $x$. Moreover, at such an $x$, $\partial\phi/\partial x_{j_2}(x)$ is zero and $F$ must be increasing, since $F^\prime(x)=\phi^\prime_{z_i,\ldots,z_{i^*},j_2}(x)-h^\prime(x)=\partial\phi/\partial x_{j_2}(x)-h^\prime(x)=-h^\prime(x)>0$. If $x$ lies at an interval boundary, the same inequality holds for the appropriate one-sided derivative of $F$, and adjacent $F$ pieces glue continuously with consistent one-sided slopes; hence, the $F$ root is still simple. Thus, $F=0$ can only be satisfied once over $[x_1^*,1]$. Since there are finitely many such $F$, there are finitely many of these functional forms that satisfy $\phi_{z_i,\ldots,z_{i^*},j_2}=h$, so there are finitely many points in the domain of $\phi$ at which $\phi$ could be at a minimum. These points, up to a choice of endpoint in the $j_1$ interval, must give distinct $\phi$ values: since otherwise, if they have the same value $\hat{\phi}$ of $\phi$, they must have the same value of $\hat{x}\in[x_1^*,1]$ that solves $h(\hat{x})=\hat{\phi}$, and this $\hat{x}$ determines the locations in all other intervals up to a choice of endpoint in interval $j_1$~---~the same functional form follows from the same $\phi$ value. 

Finally, since $h$ is monotonically decreasing over $[x_1^*,1]$, the horizontal lines at the minima values of $\phi$ intersect $h$ at various heights, with only one unique value that has the lowest intersection with $h$: this is the global minimum $\phi^*$.

\noindent\textbf{Step III:} In summary, there exists the unique fixed point consisting of $\phi^*,\,y_{*},\,y_{**}$ that satisfies the system of equations $\phi^* = \underset{x\in[x^*,1]}{\inf} \{\tilde{\phi}(x)\}=\tilde{\phi}(y_{*})$ and $h(y_{**})=\phi^*=h(y_{*})$, for some $i$-indices $j_1,j_2$ such that $y_{**}\in\bar{I}_{j_1}$, $y_{*}\in\bar{I}_{j_2}$ and $y_{**}< \frac{r}{r+m+k}<\frac{r}{r+k}<y_{*}$. And, $\phi$ attains its infimum $\phi^*$ if the prior distribution of $X$ in Theorem\ \ref{thm_gen_sol} is used for inference.

\newpage
\section{Proof that $\phi^*\to(1-y_2)^m$ monotonically, as $k \to\infty$}
\label{sec_app_phistarconvergence}
Fix $m>0$, $r>0$, and consider $\tilde{\phi}_{1},\,\tilde{\phi}_{2}$ as functions of $m,\,k,\,r$ and $x$. For any $x\in[x^*,1]$, $\tilde{\phi}_{1}(m,k,r,x)$ and $\tilde{\phi}_{2}(m,k,r,x)$ are increasing functions of $k$. To see this, choose any small $\varepsilon>0$. One can show that $\tilde{\phi}_{1}(m,k+\varepsilon,r,x)>\tilde{\phi}_{1}(m,k,r,x)$ or, equivalently, that 
\begin{align}
&\left|\begin{array}{cc}\text{Num}(m,k+\varepsilon,r,x)&\text{Num}(m,k,r,x)\\\text{Den}(m,k+\varepsilon,r,x)&\text{Den}(m,k,r,x)\end{array}\right|>0&
\label{eqn_determinant_increasewithk}
\end{align}
where $\tilde{\phi}_{1}(m,k,r,x)=\frac{{\text{Num}}(m,k,r,x)}{{\text{Den}}(m,k,r,x)}$. The multi-linearity and antisymmetry of determinants, as well as the ordering and boundedness of the $ y_i$s, mean the determinant in \eqref{eqn_determinant_increasewithk} is a sum of strictly positive determinants. So, $\tilde{\phi}_{1}(m,k+\varepsilon,r,x)>\tilde{\phi}_{1}(m,k.,r,x)$ and, analogously, $\tilde{\phi}_{2}(m,k+\varepsilon,r,x)>\tilde{\phi}_{2}(m,k,r,x)$, hence \begin{equation*} \min\{\tilde{\phi}_{1}(m,k+\varepsilon,r,x),\tilde{\phi}_{2}(m,k+\varepsilon,r,x)\}>\min\{\tilde{\phi}_{1}(m,k,r,x),\tilde{\phi}_{2}(m,k,r,x)\}\end{equation*} by the monotonicity of ``$\min$''. Thus, $\tilde{\phi}$ is also monotonically increasing in $k$, and so is ${\phi}^{*}$; i.e. by the monotonicity of ``$\inf$'', $\underset{x\in[x^*,1]}{\inf}\tilde{\phi}(m,k+\varepsilon,r,x)\geqslant\underset{x\in[x^*,1]}{\inf}\tilde{\phi}(m,k,r,x)$.

$\phi^*$ is bounded, since $0<\phi<1$. Together, the \emph{completeness of real numbers}, as well as the boundedness and monotonicity of $\phi^\ast$, imply $\lim\limits_{k\rightarrow\infty}\phi^\ast$ exists. 

This limit is $(1-y_2)^m$, as the following argument shows. For a given value of $k$, we have $\phi^{*}_k:=\underset{x}{\inf}\tilde{ \phi}_{k}(x)$ and $\tilde{ \phi}_{k}(x):=\min\{\tilde{ \phi}_{1k}(x),\tilde{ \phi}_{2k}(x)\}$. We want to show that \begin{align}
\lim\limits_{k\to\infty}\phi^*_{k}&=\lim\limits_{k\to\infty}\underset{x}{\inf}\tilde{ \phi}_{k}(x)=\underset{x}{\inf}\lim\limits_{k\to\infty}\tilde{ \phi}_{k}(x)=\underset{x}{\inf}\{\min\{\lim\limits_{k\to\infty}\tilde{ \phi}_{1k}(x),\lim\limits_{k\to\infty}\tilde{ \phi}_{2k}(x)\}\}\nonumber\\&=\min\{\underset{x}{\inf}\lim\limits_{k\to\infty}\tilde{ \phi}_{1k}(x),\underset{x}{\inf}\lim\limits_{k\to\infty}\tilde{ \phi}_{2k}(x)\}=(1-y_2)^m  \label{eq_monotonicity_of_phi_star}
 \end{align}

The first equality in \eqref{eq_monotonicity_of_phi_star}
 holds by definition. The second equality, $\lim\limits_{k\to\infty}\underset{x}{\inf}\tilde{ \phi}_{k}(x)=\underset{x}{\inf}\lim\limits_{k\to\infty}\tilde{ \phi}_{k}(x)$, holds because:
 \begin{enumerate}
\item  $\lim\limits_{k\to\infty} \underset{x}{\inf} \tilde{ \phi}_{k}(x)\geqslant  \underset{x}{\inf}\lim\limits_{k\to\infty}\tilde{ \phi}_{k}(x) $: For each $k$, $\tilde{\phi}_k$ has a finite number of discontinuities. Choose small $\varepsilon>0$ and $\eta>0$. Enclose these discontinuities in sufficiently small open intervals for all $k$ such that \textbf{1)} the maximum total length of these intervals is less than $(1-x^*_1)\eta$ and, \textbf{2)} for each $k$, there exists $x_k$ outside of all these open intervals such that $\tilde{\phi}_k(x_k)<\inf\limits_{x}\tilde{\phi}_k(x)+\varepsilon$. By this construction, the subset of $[x^*_1,1]$ containing the sequence of $x_k$ is compact, and the sequence of $\tilde{\phi}_k$s are continuous over this subset. Thus, there is a convergent subsequence $x_{k_i}$ and a limit $\bar{x}$ of this sequence in this compact set so that, for fixed $j$, $\tilde{\phi}_{j}(x_{k_i})\to\tilde{\phi}_j(\bar{x})$. For fixed $j$, the monotonicity of $\tilde{\phi}_k$ implies $\tilde{\phi}_j(x_{k_i})\leqslant\tilde{\phi}_{k_i}(x_{k_i})<\inf\limits_{x}\tilde{\phi}_{k_i}(x)+\varepsilon$ for all $k_i\geqslant j$. Upon taking limits, previous observations about the convergent $x_{k_i}$ further imply $\tilde{\phi}_j(\bar{x})\leqslant\lim\limits_{k_i\to\infty}\tilde{\phi}_{k_i}(x_{k_i})\leqslant\lim\limits_{k_i\to\infty}\inf\limits_{x}\tilde{\phi}_{k_i}(x)+\varepsilon$, for fixed $j$. Since this holds for all $j$, it holds in the limit of large $j$; i.e. $\lim\limits_{j\to\infty}\tilde{\phi}_{j}(\bar{x})\leqslant\lim\limits_{k_i\to\infty}\inf\limits_{x}\tilde{\phi}_{k_i}(x)+\varepsilon$. Consequently, $\inf\limits_{x}\lim\limits_{k\to\infty}\tilde{\phi}_{k}(x)\leqslant\lim\limits_{k\to\infty}\tilde{\phi}_{k}(\bar{x})\leqslant\lim\limits_{k\to\infty}\inf\limits_{x}\tilde{\phi}_{k}(x)+\varepsilon$.
\item $\underset{x}{\inf}  \lim\limits_{k\to\infty}\tilde{ \phi}_{k}(x) \geqslant
   \lim\limits_{k\to\infty} \underset{x}{\inf} \tilde{ \phi}_{k}(x)$, since 
    $\lim\limits_{k\to\infty}\tilde{ \phi}_{k}(x) \geqslant
    \tilde{ \phi}_{k}(x)$ for all $k$ (thus, for all $k$, $\underset{x}{\inf}  \lim\limits_{k\to\infty}\tilde{ \phi}_{k}(x) \geqslant
    \underset{x}{\inf} \tilde{ \phi}_{k}(x)$).
\end{enumerate}

   It suffices to show $\lim\limits_{k\to\infty}\tilde{ \phi}_{k}(x)=\min\{\lim\limits_{k\to\infty}\tilde{ \phi}_{1k}(x),\lim\limits_{k\to\infty}\tilde{ \phi}_{2k}(x)\}$ for the third equality in \eqref{eq_monotonicity_of_phi_star}, which follows immediately from two facts: \textbf{1)} the $\tilde{\phi}_{ik}$s are monotonically increasing and bounded, so their limits exist; \textbf{2)} the function $m:\mathbb R^2\rightarrow\mathbb R$, $m(u,v):=\min(u,v)$ is continuous. 
   
   For the fourth equality, 
   any pair of real-valued functions over an interval, $f$ and $g$, satisfies $\underset{x}{\inf}\{\min\{f(x),g(x)\}\}=\min\{\underset{x}{\inf}f(x),\underset{x}{\inf}g(x)\}$. Therefore, \begin{equation*}\underset{x}{\inf}\{\min\{\lim\limits_{k\to\infty}\tilde{ \phi}_{1k}(x),\lim\limits_{k\to\infty}\tilde{ \phi}_{2k}(x)\}\}=\min\{\underset{x}{\inf}\lim\limits_{k\to\infty}\tilde{ \phi}_{1k}(x),\underset{x}{\inf}\lim\limits_{k\to\infty}\tilde{ \phi}_{2k}(x)\}\,.\end{equation*}
   
   For the fifth equality, recall that for any $x\in[x^*_1,1]$ there is $x_l\in[0,\frac{r}{r+m+k}]$ such that $h(x_l)=h(x)$. Also recall, $x_l<x^*<x$ by definition. Then one, and only one, of the following mutually exclusive situations holds:
   \begin{enumerate}
     \item If $x^*\in I_1$ then $x_1^*=y_1$ and $x_l$ must also be in $I_1$. Since the preferred location in the first interval is always $0$, this implies $\tilde{\phi}_{1}(x)=\tilde{\phi}_{2}(x)$. If $x\in I_2$ then $\lim\limits_{k\to\infty}\tilde{\phi}_{1}(x)=\lim\limits_{k\to\infty}\tilde{\phi}_{2}(x)=(1-x)^m\geqslant(1-y_2)^m$ because, for $x<y_2$, \begin{align*}
\lim\limits_{k\to\infty}\tilde{\phi}_{1}
=&\,\,
\frac{(1-x)^{m}p_2+\sum_{i=3}^{n}\left(\frac{y_{i-1}}{x}\right)^r\lim\limits_{k\to\infty}\left(\frac{1-y_{i-1}}{1-x}\right)^k(1-y_{i-1})^{m}p_{i}}{p_2+\sum_{i=3}^{n}\left(\frac{y_{i-1}}{x}\right)^r\lim\limits_{k\to\infty}\left(\frac{1-y_{i-1}}{1-x}\right)^kp_{i}}\\=&\,\,(1-x)^m\,, \text{ since the } y_i\text{s are strictly ordered.}\end{align*} If instead, $x\in I_j$ where $j>2$, then $\lim\limits_{k\to\infty}\tilde{\phi}_{1}(x)=\lim\limits_{k\to\infty}\tilde{\phi}_{2}(x)=(1-y_2)^m$. So, $\min\{\underset{x}{\inf} \,\lim\limits_{k\to\infty}\tilde{\phi}_{1}(x),\underset{x}{\inf}\,\lim\limits_{k\to\infty}\tilde{\phi}_{2}(x)\}=(1-y_2)^m$
      \begin{figure}[htbp!]\centering
	\begin{tikzpicture}
  \draw[->] (0,0) -- (12,0);

  \draw (0,0) -- (0,-0.15) node[below=5pt] at (0,0) {$0$};
  \foreach \x/\label in {
    3/{$y_1$},
    4/{\dots},
    6/{\dots},
    7/{\dots},
    8/{\dots},
    9/{\dots},
    10/{$y_{n-1}$},
    11/{$1$}
  }{
    \draw (\x,0) -- (\x,-0.15);
    \node[below=5pt] at (\x,0) {\label};
  }

  \node[below=2.5pt] at (1.5,0) {$x^*$};

\end{tikzpicture}
	\caption{ \textit{Illustration of $x^*$ lying in $I_1$}}
	\label{fig_comp_1996_pap1}
\end{figure}     
       \item  If $x^*\in I_2$ then $x_1^*=x^*$ and $x_l\in I_j$ (where $j\leqslant2$), depending on the value of $x$. If $x_l\in I_1$ and $x\in I_2$, then $\tilde{\phi}_1(x)=\tilde{\phi}_2(x)$ and $\lim\limits_{k\to\infty}\tilde{\phi}_{1}(x)=\lim\limits_{k\to\infty}\tilde{\phi}_{2}(x)=(1-x)^m\geqslant(1-y_2)^m$. If instead, $x_l\in I_1$ and $x\in I_j$ where $j>2$, then $\lim\limits_{k\to\infty}\tilde{\phi}_{1}(x)=\lim\limits_{k\to\infty}\tilde{\phi}_{2}(x)=(1-y_2)^m$. While, if $x_l\in I_2$ and $x\in I_j$ where $j>2$, then $\lim\limits_{k\to\infty}\tilde{\phi}_{1}(x)=(1-y_1)^m$ and $\lim\limits_{k\to\infty}\tilde{\phi}_{2}(x)=(1-y_2)^m$. Lastly, if $x_l,\,x\in I_2$, then $\lim\limits_{k\to\infty}\tilde{\phi}_{1}(x)=(1-y_1)^m$ and $\lim\limits_{k\to\infty}\tilde{\phi}_{2}(x)=(1-x)^m$. So, altogether, $\inf\limits_x\{\lim\limits_{k\to\infty}\tilde{\phi}_{2}(x)\}=(1-y_2)^m$. Therefore, $\min\{\underset{x}{\inf} \,\lim\limits_{k\to\infty}\tilde{\phi}_{1}(x),\underset{x}{\inf}\,\lim\limits_{k\to\infty}\tilde{\phi}_{2}(x)\}=(1-y_2)^m$.
       \begin{figure}[htbp!]
	\centering
	\begin{tikzpicture}
  \draw[->] (0,0) -- (12,0);

  \draw (0,0) -- (0,-0.15) node[below=5pt] at (0,0) {$0$};
  \foreach \x/\label in {
    1/{$y_1$},
    3/{$y_2$},
    4/{\dots},
    6/{\dots},
    7/{\dots},
    9/{\dots},
    10/{$y_{n-1}$},
    11/{$1$}
  }{
    \draw (\x,0) -- (\x,-0.15);
    \node[below=5pt] at (\x,0) {\label};
  }

  \node[below=2.5pt] at (2,0) {$x^*$};

\end{tikzpicture}
	\caption{
    \textit{Illustration of $x^*$ lying in $I_2$}}
	\label{fig_comp_1996_pap2}
\end{figure}      
       \item If $x^*\in I_i$ where $i>2$, then $x$ can be placed arbitrarily close to $x^*$ to ensure $x_l$ is arbitrarily close to $0$. Thus, $\lim\limits_{k\to\infty}\tilde{\phi}_{1}(x)=\lim\limits_{k\to\infty}\tilde{\phi}_{2}(x)=(1-y_2)^m$. Note, $\inf\limits_x\{\lim\limits_{k\to\infty}\tilde{\phi}_{1}(x)\}\geqslant(1-y_2)^m$ and $\inf\limits_x\{\lim\limits_{k\to\infty}\tilde{\phi}_{2}(x)\}\geqslant(1-y_2)^m$. So, $$\min\{\underset{x}{\inf} \,\lim\limits_{k\to\infty}\tilde{\phi}_{1}(x),\underset{x}{\inf}\,\lim\limits_{k\to\infty}\tilde{\phi}_{2}(x)\}=(1-y_2)^m\,.$$
       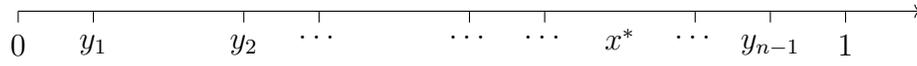
\begin{figure}[htbp!]
	\centering
	
\begin{tikzpicture}
  \draw[->] (0,0) -- (12,0);

  \draw (0,0) -- (0,-0.15) node[below=5pt] at (0,0) {$0$};
  \foreach \x/\label in {
    1/{$y_1$},
    3/{$y_2$},
    4/{\dots},
    6/{\dots},
    7/{\dots},
    9/{\dots},
    10/{$y_{n-1}$},
    11/{$1$}
  }{
    \draw (\x,0) -- (\x,-0.15);
    \node[below=5pt] at (\x,0) {\label};
  }

  \node[below=2.5pt] at (8,0) {$x^*$};

\end{tikzpicture}
	\caption{ \textit{Illustration of $x^*$ lying in $I_i$ for $i>2$.}}
	\label{fig_comp_1996_pap3}
\end{figure}
   \end{enumerate}

\newpage
\section{The ratio of the number of demands needed to demonstrate a reliability bound~---~using a Beta prior vs using a \acrshort*{cbi} prior~---~is asymptotically bounded.}
Let $k_{C}$ and $k_{\beta}$ denote the number of failure-free demands needed to demonstrate a posterior probability $(1-\alpha)$ of the software successfully handling $m$ future demands given a number of failures $r$, resulting from using the conservative \acrshort*{cbi} prior and Beta prior, respectively. In what follows, to ensure comparison between the Beta and \acrshort*{cbi} requirements, we need $1-(1-y_2)^m<\alpha<1$. Then, for a given $r$, the ratio of the total number of demands needed from using the two priors is $\frac {k_{\beta}+r}{k_{C}+r}$. This ratio is bounded as the number of failures increases (i.e. as $r\to\infty$). The following arguments prove this claim. 

 Show  $\lim\limits_{r\to\infty}\frac{k_{\beta}}{r}=\frac{(1-\alpha)^{1/m}}{{1-(1-\alpha)^{1/m}}}$ as follows. \cite{littlewood1997some} used a Beta prior $\text{Beta}(1,1)$, which is equivalent to a uniform distribution. 
Recall,  $$\mathbb P(m \text{ \emph{failure-free demands}} \mid k \text{\emph{ failure-free and }} r \text{\emph{ failed demands}})$$ is the posterior probability of interest. Using the Beta prior, this posterior probability is $(1-\alpha)$ if, and only if, $\frac{\text{Beta}(1+r,1+m+k_\beta)}{\text{Beta}(1+r,1+k_\beta)}=(1-\alpha)$. Simplifying, this is the requirement $\prod\limits_{i=1}^{m}\frac{k_\beta+i}{r+k_\beta+1+i}=(1-\alpha)\,.$ This implies $k_\beta/r$ is bounded, since
\begin{align*}
\begin{array}{cc}
&\left(\dfrac{k_\beta+1}{\,r+k_\beta+2\,}\right)^{m}
\;\leqslant\;
{\displaystyle\prod_{i=1}^{m}\frac{k_\beta+i}{r+k_\beta+1+i}}
\;\leqslant\;
\left(\dfrac{k_\beta+m}{\,r+k_\beta+1+m\,}\right)^{m} \\\\
\iff&\left(1-\dfrac{1+1/r}{1+t_r+2/r}\right)^m
\;\leqslant\;
1-\alpha
\;\leqslant\;
\left(1-\dfrac{1+1/r}{1+t_r+(m+1)/r}\right)^m\\\\
\iff&\dfrac{(1-\alpha)^{1/m}}{1-(1-\alpha)^{1/m}}+o(1)\,\leqslant\, t_r\,\leqslant\,\dfrac{(1-\alpha)^{1/m}}{1-(1-\alpha)^{1/m}}+o(1)\,,
\end{array}
\end{align*} 
for $t_r:=k_\beta(r)/r$. By the \emph{squeeze theorem}, as $r\to\infty$, $\frac{k_\beta(r)}{r}\to \frac{(1-\alpha)^{1/m}}{1-(1-\alpha)^{1/m}}$.


 Next, show the finiteness of $\frac{\frac{k_{\beta}}{r}+1}{\frac{k_{C}}{r}+1}$. First show that if $\phi^*$ is the infimum of Theorem~\ref{thm_gen_sol} for a given $r$ and $k$ then, for any $r'>r$, there exists a unique $k(r')$ such that $k(r')>k$ and the infimum of Theorem~\ref{thm_gen_sol} (using $k(r')$ and $r'$) remains $\phi^*$. Then, argue that $k_{\beta}\leqslant k_{C}$. Together, these imply $\frac {k_{\beta}+r}{k_{C}+r}$ lies between 0 and 1. 
 
 $\phi^*$ is a decreasing function of $r$ when other parameters are fixed; showing this is analogous to our earlier argument showing $\phi^*$ increases with $k$ (see Appendix~\ref{sec_app_phistarconvergence}). Note, the objective function $\phi$ is a continuous function of the $x_i\in \bar{I}_i$, $k$ and $r$: more precisely, it is a continuous function over $\bar{I}_1\times\ldots\times\bar{I}_n\times[0,k_u]\times[0,r_u]$ for any positive $k_u, r_u\in \mathbb R$. Thus, by the \acrfull*{ivt}, $\phi$ attains all of its values between its minimum and maximum over any such cartesian subset. Also note, we know $\phi^*$ is bounded, by the boundedness of $\phi$. All of the foregoing implies that the value of the infimum, $\phi^*$, can be kept fixed as we increase $r$ by increasing $k$ appropriately. In particular, this can be done when $\phi^* = (1-\alpha)$. 
 
For a given $r$, we must have $k_{\beta}\leqslant k_{C}$. Indeed, since the posterior probability of interest is $(1-\alpha)$ using the Beta prior with $r$ and $k_{\beta}$, the infimum of the posterior probability must be smaller than $(1-\alpha)$. So, to make the infimum equal $(1-\alpha)$, we need $k_{C}\geqslant k_{\beta}$, since the infimum is a monotonically increasing function of $k$. Thus, for a given $r$, we have $k_{\beta}\leqslant k_{C}\Longrightarrow 0<\dfrac {\frac{k_{\beta}}{r}+1}{\frac{k_{C}}{r}+1}<1\iff 0\leqslant \liminf\limits_{r\to\infty} \left(\dfrac {\frac{k_{\beta}}{r}+1}{\frac{k_{C}}{r}+1}\right)\leqslant\limsup\limits_{r\to\infty} \left(\dfrac {\frac{k_{\beta}}{r}+1}{\frac{k_{C}}{r}+1}\right)\leqslant1$; i.e. the ratio is asymptotically bounded. 

\newpage
\section{Convergence properties of $y_{*}$ and $y_{**}$ for fixed $\phi^*$}
\label{sec_app_conv_ystar_ystarstar}
The convergence properties of $y_{*}$ and $y_{**}$ are characterized by the following lemma. Fix $m>0$ and a target level $\phi^*\in(0,1)$. For each $r>0$, choose $k=k(r)> 0$
so that the infimum of the objective equals $\phi^*$. Define, for $x\in[0,1]\setminus\{a_r\}$,
\[
h_{r,k}(x):=(1-x)^m\,\frac{r-x(r+m+k)}{\,r-x(r+k)\,},\qquad
a_r:=\frac{r}{r+k},\qquad b_r:=\frac{r}{r+m+k}.
\]
Then $0<b_r<a_r<1$, $h_{r,k}$ is strictly decreasing on $[0,b_r]$ and on $(a_r,1]$, and strictly negative on $(b_r,a_r)$, with
$h_{r,k}(0)=1$, $h_{r,k}(b_r)=h_{r,k}(1)=0$, and $h_{r,k}(x)\to -\infty$ as $x\uparrow a_r$,
$h_{r,k}(x)\to +\infty$ as $x\downarrow a_r$. Consequently, for each $(r,k)$, there are unique $y_{**}(r,k)\in(0,b_r)$, $y_{*}(r,k)\in(a_r,1)$ such that $h_{r,k}\bigl(y_{**}(r,k)\bigr)=h_{r,k}\bigl(y_{*}(r,k)\bigr)=\phi^*$.

\begin{lemma}[convergence of $y_\ast$ and $y_{\ast\ast}$ for fixed $\phi^*$]\label{lem:subseq-dichotomy}
Let $x_*:=1-(\phi^*)^{1/m}\in(0,1)$, and
let $(r_j)_{j\geqslant 1}$ be any increasing sequence with $r_j\to\infty$ along which
$a_{r_j}\to\bar x_\lambda\in[0,1]$. Then, along this sequence,
\[
\boxed{\;
\begin{aligned}
&\text{if }~\bar x_\lambda < x_*,~&& y_{**}(r_j,k(r_j))\longrightarrow \bar x_\lambda,\quad y_{*}(r_j,k(r_j))\longrightarrow x_*,\\
&\text{if }~\bar x_\lambda > x_*,~&& y_{**}(r_j,k(r_j))\longrightarrow x_*,\quad y_{*}(r_j,k(r_j))\longrightarrow \bar x_\lambda,\\
&\text{if }~\bar x_\lambda = x_*,~&& y_{**}(r_j,k(r_j)),\,y_{*}(r_j,k(r_j))\longrightarrow x_*.
\end{aligned}}
\]
This includes $\bar x_\lambda=0$ (i.e.\ $k(r)/r\to\infty$) and
$\bar x_\lambda=1$ (i.e.\ $k(r)/r\to 0$)\footnote{$x_*=\bar{x}_\lambda=1$ is degenerate and unlikely in practice, since this requires $\phi^*=0$.}.
\end{lemma}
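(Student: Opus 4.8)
The plan is to reduce everything to the pointwise limit of $h_{r,k}$ together with the strict monotonicity on each branch already recorded before the lemma. First I would identify the limiting function: writing the fraction in $h_{r,k}$ as $1-\frac{xm/r}{(1-x)-x(k/r)}$, and noting that along the chosen subsequence $a_{r_j}\to\bar x_\lambda$ forces $k(r_j)/r_j\to(1-\bar x_\lambda)/\bar x_\lambda$ (with $k/r\to\infty$ when $\bar x_\lambda=0$ and $k/r\to 0$ when $\bar x_\lambda=1$), I would show that for each fixed $x\neq\bar x_\lambda$ the correction term vanishes, so $h_{r_j,k(r_j)}(x)\to(1-x)^m$ pointwise. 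I would also note $b_r/a_r=\frac{r+k}{r+m+k}\to 1$, hence $b_{r_j}\to\bar x_\lambda$ as well, so the pole $a_r$, the zero $b_r$, and the common ``boundary'' of the two roots all migrate to $\bar x_\lambda$.

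Next I would exploit the containment $y_{**}(r,k)\in(0,b_r)$ and $y_{*}(r,k)\in(a_r,1)$. Since $a_{r_j},b_{r_j}\to\bar x_\lambda$, this gives the trapping bounds $\limsup_j y_{**}\leqslant\bar x_\lambda\leqslant\liminf_j y_{*}$. The core of the argument is a one-sided squeeze on each branch. On the decreasing right branch $(a_r,1]$, for a test point $c\in(\bar x_\lambda,1)$ I compare $h_{r_j,k}(c)$ with $\phi^*$: for large $j$ we have $c>a_{r_j}$ and $h_{r_j,k}(c)\to(1-c)^m$; writing $x_*:=1-(\phi^*)^{1/m}$, if $c<x_*$ then $(1-c)^m>\phi^*$, so eventually $h_{r_j,k}(c)>\phi^*$ and monotonicity forces $y_*>c$, whereas if $c>x_*$ then $y_*<c$. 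Letting $c\to x_*$ from each side, and combining with the trapping bound, yields $y_*\to\max(\bar x_\lambda,x_*)$ (the $\max$ records that $y_*>a_{r_j}$ wins whenever $x_*\leqslant\bar x_\lambda$). The identical argument on the decreasing left branch $[0,b_r)$, with test points $c\in(0,\bar x_\lambda)$, gives $y_{**}\to\min(\bar x_\lambda,x_*)$.

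Finally I would observe that $y_*\to\max(\bar x_\lambda,x_*)$ and $y_{**}\to\min(\bar x_\lambda,x_*)$ are precisely the three boxed cases: the $\max$/$\min$ collapse to $(x_*,\bar x_\lambda)$, $(\bar x_\lambda,x_*)$, or $(x_*,x_*)$ according as $\bar x_\lambda<x_*$, $\bar x_\lambda>x_*$, or $\bar x_\lambda=x_*$. The endpoint values $\bar x_\lambda\in\{0,1\}$ need no separate treatment, since the trapping bounds already pin the relevant root at the boundary.

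I expect the main obstacle to be that pointwise convergence of $h_{r,k}$ fails exactly at $x=\bar x_\lambda$, the common accumulation point of $a_r$, $b_r$, and the root boundaries. The care required is to keep every test point $c$ strictly separated from $\bar x_\lambda$, so that the denominator $(1-x)-x(k/r)$ of the correction term stays bounded away from $0$ (this holds precisely because $|c-a_{r_j}|$ does), and to send $c\to x_*$ only after $r_j\to\infty$. Leaning on strict monotonicity of each branch — rather than attempting uniform convergence across the migrating singularity — is what makes the squeeze clean and lets one $\min/\max$ computation dispatch all three cases at once.
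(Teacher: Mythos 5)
Your proposal is correct and follows essentially the same route as the paper's proof: both identify $(1-x)^m$ as the limit of $h_{r,k}$ away from the migrating singularity $\bar x_\lambda$, compare $h_{r_j,k(r_j)}$ with $\phi^*$ at test points strictly separated from $\bar x_\lambda$, invoke strict monotonicity of the two branches to locate the roots, and use the containments $y_{**}<b_r$, $y_{*}>a_r$ together with $a_{r_j},b_{r_j}\to\bar x_\lambda$ to trap the remaining root. Your two streamlinings---pointwise convergence at fixed test points in place of the paper's uniform convergence on compact sets, and the unified $\max/\min$ formulation in place of the paper's separate case-by-case Steps 2--5---are sound but presentational rather than a genuinely different argument.
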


\begin{proof} 
\emph{Step 0 (preliminaries).}
Note, $0<b_r<a_r<1$ and $r-x(r+k)=(r+k)(a_r-x)=r(1+k/r)\,(a_r-x)=r(1+t_r)\,(a_r-x)$, where $t_r:=\frac{k}{r}> 0.$
Hence, $\bigl|r-x(r+k)\bigr|=r(1+t_r)\,\bigl|x-a_r\bigr|$. Moreover, $b_r\xrightarrow[r\to\infty]{}\bar x_\lambda$
whenever $a_r\xrightarrow[r\to\infty]{}\bar x_\lambda$.

\emph{Step 1 (uniform convergence away from $\bar{x}_\lambda$).}
Away from $\bar{x}_\lambda$, $h_{r,k}$ converges uniformly to $(1-x)^m$. To show this on the r.h.s. of $\bar{x}_\lambda$ (the l.h.s. is analogous), fix $\delta\in(0,1-\bar{x}_\lambda)$ and $\eta\in\bigl(0,\,1-\bar x_\lambda-\delta\bigr)$, then define the compact set
$K_{\delta,\eta}^+:=[\,\bar x_\lambda+\delta,\,1-\eta\,]\subset(0,1)$.
Since $a_{r_j}\to\bar x_\lambda$, there exists $J_1$ such that, for all $j\geqslant J_1$, $a_{r_j}\leqslant \bar x_\lambda+\delta/2$, whence for every $x\in K_{\delta,\eta}^+$, $x$ is more than a distance $\delta/2$ away from $a_{r_j}$; that is, $|x-a_{r_j}|\ \geqslant\delta/2$. Therefore, $\bigl|r_j-x(r_j+k(r_j))\bigr|
\geqslant(\delta/2)\,r_j(1+t_{r_j})$. Using this, we can uniformly bound a part of the $h_{r,k}$ expression over $K_{\delta,\eta}^+$ as $r_j\rightarrow\infty$. Indeed, 
\[
\left|\frac{r_j-x(r_j+m+k(r_j))}{\,r_j-x(r_j+k(r_j))\,}-1
\right|=\left|\frac{mx}{\,r_j(1+t_{r_j})(a_{r_j}-x)\,}\right|\leqslant\left|\frac{2m(1-\eta)}{r_j(1+t_{r_j})\delta}\right|,
\]
since $x\leqslant 1-\eta$ and $|x-a_r|\geqslant\delta/2$. So, the rational expression on the far left tends to $1$ uniformly
on $K_{\delta,\eta}^+$, and $h_{r_j,k(r_j)}(x)\ \to\ (1-x)^m$ uniformly on $K_{\delta,\eta}^+\,$.

\emph{Step 2 ($y_{*}\rightarrow x_*$ when $\bar x_\lambda<x_*$).}
Choose $\delta\in\bigl(0,\tfrac12(x_*-\bar x_\lambda)\bigr)$ and $\eta\in\bigl(0,\,\tfrac12(1-x_*)\bigr)$,
which ensure $x_*\in K_{\delta,\eta}^+\,$. By Step~1 and the strict decrease of $(1-x)^m$ on $(0,1)$, there exists
$\varepsilon>0$ and $J_2$ such that for all $j\geqslant J_2$,
\[
h_{r_j,k(r_j)}(x_*-\varepsilon)>(1-(x_*-\varepsilon))^m>\phi^*>h_{r_j,k(r_j)}(x_*+\varepsilon)\,.
\]
Since $h_{r_j,k(r_j)}$ is strictly decreasing on $(a_{r_j},1]$, 
$y_*(r_j,k(r_j))\in(x_*-\varepsilon,x_*+\varepsilon)$; hence, $y_*(r_j,k(r_j))\to x_*$.

\emph{Step 3 ($y_{**}\rightarrow\bar{x}_\lambda$ when $\bar x_\lambda<x_*$).}
Fix $\varepsilon\in(0,\tfrac12\bar{x}_\lambda)$ and set $y:=\bar{x}_\lambda-\varepsilon$. Since $b_{r_j}\to\bar x_\lambda$,
we have $y<b_{r_j}$ for all large $j$. By an argument analogous to that used in Step 1, 
$h_{r_j,k(r_j)}(y)\to (1-y)^m$ on any compact subset of $[0,\bar{x}_\lambda]$ containing $y$. Because $y<\bar x_\lambda< x_*$, we have $(1-y)^m>(1-\bar x_\lambda)^m> (1-x_*)^m=\phi^*$. So, for large $j$, we have $h_{r_j,k(r_j)}(y)>\phi^*$. Since $h_{r_j,k(r_j)}$ is strictly decreasing on $[0,b_{r_j}]$
and equals $\phi^*$ at $y_{**}(r_j,k(r_j))$, it follows that $y_{**}(r_j,k(r_j))>y\geqslant \bar x_\lambda-\varepsilon$.
Together with $y_{**}(r_j,k(r_j))<b_{r_j}$ and $b_{r_j}\to\bar x_\lambda$, the \emph{squeeze theorem} yields
$$\bar{x}_\lambda=\lim_{r_j\to\infty}b_{r_j}\geqslant\lim\sup_{r_j\to\infty}y_{**}(r_j,k(r_j))\geqslant\lim\inf_{r_j\to\infty}y_{**}(r_j,k(r_j))\geqslant\bar x_\lambda-\varepsilon\,.$$

\emph{Step 4 (the case $\bar x_\lambda>x_*$).}
An almost symmetrical argument to Steps~1--3 proves the lemma in this case.

\emph{Step 5 (the degenerate pinch $\bar x_\lambda=x_*$).}
For all sufficiently small $\eta>0$, both intervals $[x_*-\eta,\,x_*)$ and $(x_*,\,x_*+\eta]$ lie,
for $j$ large, strictly inside the regions where the uniform convergence in Step~1 (or its left analogue)
applies, and the bracketing in Step~2 holds. That is, eventually, every pair of unique roots is trapped
in $(x_*-\eta,x_*+\eta)$, forcing $y_{**}(r_j,k(r_j))\to x_*\longleftarrow y_{*}(r_j,k(r_j))$.
\end{proof}

\newpage
\section{Proof of the convergence of Corollary~\ref{cor_special_case_no_failure_solution} to the solutions in \cite{strigini_software_2013}}
Let $\phi$ denote the objective function in Corollary~\ref{cor_special_case_no_failure_solution}. View $\phi$ as a function of $y\in I_1$, \begin{align*}
  \phi(y)&=\frac{(1-y)^m(1-y)^kp_1}{(1-y)^kp_1+\sum_{i=2}^{n}(1-x_i)^kp_i}+\frac{\sum_{i=2}^{n}(1-x_i)^m(1-x_i)^kp_i}{(1-y)^kp_1+\sum_{i=2}^{n}(1-x_i)^kp_i}     
   \end{align*} The first rational function on the r.h.s. grows, while the second rational function decays,  as $y\to 0$. 
   Since the $x_i$s and $y$ are strictly ordered, $(1-y)^m>(1-x_2)^m>\ldots>(1-x_n)^m$ holds, thus $\phi(y)$ is monotonically increasing as $y\to 0$. This means, $\phi(y_{11})<\phi(y_{12})$ for all $y_{11}>y_{12}$, where $y_{11}$ and $y_{12}$ are possible upper endpoints of the first interval. Thus, $\underset{x_2\in I_2,\ldots x_n\in I_{n}}{\inf}  \phi(y_{11})\leqslant\underset{x_2\in I_2,\ldots x_n\in I_{n}}{\inf} \phi(y_{12})$ for all $y_{11}>y_{12}$: i.e. $\phi^*$ also monotonically increases, as the upper endpoint of the first interval decreases. With $\phi^*\leqslant1$,  the \emph{completeness of the reals} implies $\lim\limits_{y\to0}\phi^*$ exists.  
   So, letting $y_1\to0$ in Corollary~\ref{cor_special_case_no_failure_solution}, \begin{align}
\lim\limits_{y_1\to 0}\phi^*
=&\dfrac{p_1+\sum\limits_{i=2}^{j-1}(1-y_{i}^*)^{m+k}p_{i}+(1-\lim\limits_{y_1\to 0}y_*)^{m+k}p_{j}+\sum\limits_{i=j+1}^{n}(1-y_{i-1}^*)^{m+k}p_{i}}{p_1+\sum\limits_{i=2}^{j-1}(1-y_{i}^*)^{k}p_{i}+(1-\lim\limits_{y_1\to 0}y_*)^{k}p_{j}+\sum\limits_{i=j+1}^{n}(1-y_{i-1}^*)^{k}p_{i}} \label{eq_limit}
\end{align}
where $y_i^*=\lim\limits_{y_1\to 0}y_i$ for $i\in\{2,\ldots,{j-1},{j+1},\ldots,n\}$.
   
Now, $y_*$ uniquely satisfies $\phi^*=\left(\frac{m+k}{k}\right)(1-y_*)^{m}$ for $y_{j-1}\leqslant y_*\leqslant y_j$, so $\lim\limits_{y_1\to 0}\phi^*=\left(\frac{m+k}{k}\right)(1-\lim\limits_{y_1\to 0}y_*)^{m}$. Thus $\lim\limits_{y_1\to 0}y_*$ exists, because $\lim\limits_{y_1\to 0}\phi^*$ exists. 
This also implies the existence of all $y_i^*$. Consequently, \eqref{eq_limit} exists and is unique. In particular, for $n=2$, \eqref{eq_limit} yields $\lim\limits_{y_1\to0}\phi^*
 =(p_1+(1-\lim\limits_{y_1\to0}y_*)^{m+k}p_2)/(p_1+(1-\lim\limits_{y_1\to0}y_*)^{k}p_2)$. We conclude, as $y_1\to 0$ (i.e. in the limit of a non-zero probability of perfection), Corollary~\ref{cor_special_case_no_failure_solution} converges to the solutions obtained by \cite{strigini_software_2013}. 

\noindent \textbf{Remark:} 
For $n\leqslant3$ and $r>0$, $\phi^*$ in Theorem~\ref{thm_gen_sol} tends to zero as $y_1\to0$. Assume $n=3$ and $r>0$. The objective function in Theorem~\ref{thm_gen_sol}, denoted $\phi$, becomes $\phi=(\sum_{i=2}^3x_i^r(1-x_i)^{m+k}p_i)/(\sum_{i=2}^3x_i^r(1-x_i)^kp_i)$, where $0<y_1<x_2\leqslant y_2<x_3\leqslant1$. In the limit as $y_1\to0$, $\phi$ becomes $(1-x_3)^m$ where $0<x_3\leqslant1$. Since $\phi\to0$ as $x_3\to1$, we have $\phi^*=0$. 

Similarly, for $n=2$, $\phi=(1-x_2)^m$. Since $\phi\to0$ as $x_2\to1$, then $\phi^*=0$.


\end{document}